\def\BState{\State\hskip-\ALG@thistlm}
\newtheorem{exmp}{Example}
\newtheorem{theo}{Theorem}
\newtheorem{rem}{Remark}
\newtheorem{lemma}{Lemma}
\newtheorem{definition}{Definition}[section]
\begin{document}

\title{A Price-Based Iterative Double Auction for Charger Sharing Markets}

\author{Jie Gao,
        ~Terrence Wong,
        ~Chun Wang,~\IEEEmembership{Member,~IEEE,}
        and~Jia Yuan Yu,~\IEEEmembership{Member,~IEEE}
\thanks{The authors are with the Concordia Institute for Information Systems Engineering (CIISE), Concordia University, Montréal, QC H3G 1M8,
Canada (e-mail: jie.gao@mail.concordia.ca;
te\_ong@encs.concordia.ca;
chun.wang@concordia.ca;
jiayuan.yu@concordia.ca).}
}

\maketitle

\begin{abstract}

The unprecedented growth of demand for charging electric vehicles (EVs) calls for novel expansion solutions to today's charging networks. Riding on the wave of the proliferation of sharing economy, Airbnb-like charger sharing markets
opens the opportunity to expand the existing charging networks without requiring costly and time-consuming infrastructure investments, yet the successful design of such markets relies on innovations at the interface between game theory, mechanism design, and large scale optimization. 
In this paper, we propose a price-based iterative double auction for charger sharing markets where charger owners rent out their under-utilized chargers to the charge-needing EV drivers. Charger owners and EV drivers form a two-sided market which is cleared by a price-based double auction. Chargers'  locations, availability, and time unit costs as well as the EV drivers' time, distance constraints, and preferences are considered in the allocation and scheduling process. The goal is to compute social welfare maximizing allocations 
which benefits both charger owners and EV drivers and, in turn, ensure the continuous growth of the market. We prove that the proposed double auction 
is budget balanced, individually rational, and that it is a weakly dominant strategy for EV drivers and charger owners to truthfully report their charging time constraints. In addition, results from our computation study show that the double auction achieves on average 94\% efficiency compared with the optimal solutions and scales well to larger problem instances.  
\end{abstract}

\begin{IEEEkeywords}
Electric vehicle, charging scheduling, sharing, double auction, iterative bidding, two-sided markets, social welfare.
\end{IEEEkeywords}

\IEEEpeerreviewmaketitle

\section{Introduction}
\IEEEPARstart{E}{nergy} spent on charging electric vehicles (EVs) will grow tremendously in the next decade. As estimated by the International Energy Agency, annual charging energy demand for the EV population is projected to increase from 58 billion kilowatt-hours to 640 billion kilowatt-hours from 2020 to 2030.
This surging demand places an unprecedented strain on existing charging networks which need to be substantially expanded in terms of the total amount of energy delivered and their geographical coverage.
However, traditional methods to expand the charging networks such as building new charging stations and upgrading to high speed DC chargers are often costly and time-consuming. In recent years, \textit{charger sharing} has emerged as one of the cost-effective and immediate solutions to expand the existing charging networks~\cite{plenter2018assessment,vanrykel2018fostering}.
Online charger sharing platforms are being built to connect private charger owners and EV drivers. Some popular ones include
PlugShare\footnote{https://www.plugshare.com/}, EVMatch\footnote{https://www.evmatch.com/}, ChargeHub\footnote{https://chargehub.com/en/}, Share\&Charge\footnote{https://shareandcharge.com/}, CHRG Network\footnote{https://chrg.network/} and ELbnb\footnote{https://www.thelocal.se/tag/elbnb}. 

Using such platforms, private charger owners aim to rent out under-utilized chargers to recoup their installation and operation costs and EV drivers wish to procure the charging services to satisfy their energy needs.

The success of charger sharing platforms hinges on two major issues: (i) attracting both charger owners and EV drivers to the platform by providing added values to both of the groups, and (ii) retaining them by computing charging scheduling and pricing solutions which maximize the social welfare across all participants. 
Social welfare maximization benefits both EV drivers and charger owners and ensures sustainable growth of these platforms. 
The first issue requires the engineering of individual rationality~\cite{mas1995microeconomic} into the sharing mechanisms, while the second issue calls for market-based charger sharing mechanisms which optimize the overall resource allocation in game-theoretic settings.

At the present time, the main scheduling mechanisms used by the charger sharing platforms are variants of the First-Come-First-Served (FCFS) mechanism with the ``take-it-or-leave-it'' pricing schemes. While these mechanisms do motivate charger owners and EV drivers to participate if the price is right, however, they do not possess the property of individual rationality. Furthermore, these mechanisms do not explicitly consider game-theoretic behaviors of participants in their mechanism design, nor ensure the optimality of computed scheduling solutions even when FCFSs are replaced with a centralized optimization algorithm due to the market nature of the charger sharing environment. In market environments, EV drivers and charger owners are independent and self-interested agents, and the optimization algorithm may not have access to all the needed charging scheduling information because the information is privately held by the agents who cannot be assumed to follow the algorithm but rather their own self-interests. 
In this decentralized setting, agents may behave strategically in the
pursuit of their own benefits rather than the system-wide optimality. 

Charger sharing mechanism design is a relatively new research topic and the literature on it is limited. As a more general research area, EV charging scheduling has attracted increased attention in the past years. Comprehensive reviews can be found in~\cite{rahman2016review} and~\cite{wang2016smart}. Several studies~\cite{qin2011charging,de2014optimal,kang2015centralized,kang2017optimal,darabi2016intelligent} have tackled EV charging scheduling problems by applying centralized approaches, which assume that a central scheduler is responsible for all allocation decisions. However, 
these centralized approaches cannot be applied to EV charging scheduling problems in the context of a charger sharing market. This market is naturally decentralized~\cite{mas1995microeconomic, wellman2001auction}, in which scheduling information are distributed and controlled by different self-interested agents. 

Market based approaches, such as auctions, have gained popularity in providing socially desirable solutions to decentralized EV charging scheduling problems. These approaches respect autonomy and private information inherited from a distributed system and can provide incentives for agents to reveal truthful information~\cite{parkes2001auction}. 
For example, P. Samadi \textit{et al.}~\cite{samadi2011optimal} propose a Vickrey-Clarke-Groves (VCG)
based mechanism for EV charging scheduling with the objective of maximizing the social welfare. In a related approach, J.de Hoog \textit{et al.}~\cite{de2015market} design a market mechanism for smart charging that optimally allocates available capacity and, at the same time, ensures network stability. 

However, these studies address the setting of one-sided markets with one charger supplier. They cannot be directly applied to two-sided charger sharing markets. 
In~\cite{gerding2013two}, an EV charging scheduling problem is studied in a two-sided market. The authors propose a VCG payment rule to ensure truthfulness of EV drivers and charging stations. Although the VCG mechanism is well known for being truthful and socially optimal, implementations of VCG-type mechanisms generally suffer from excessively high computational costs~\cite{ausubel2006lovely} and are impractical for charger sharing markets with large numbers of charger owners and EV drivers.

Two-sided markets which involve two distinct groups of players, e.g., stock markets, are normally cleared by double auctions.
In his seminal paper,  McAfee~\cite{mcafee1992dominant} proposes a trading reduction rule to achieve truthfulness in two-sided markets with homogeneous single unit goods. For the same problem, Chu and Shen~\cite{chu2008truthful} propose an agent competition mechanism by applying shadow prices to achieve strategy proof. 
More recently, some research has attempted to design double auctions for multi-unit heterogeneous trading problems. For example,
Y. Chen \textit{et al.}~\cite{chen2013tames} extend McAfee's mechanism to multi-unit heterogeneous settings. They apply the proposed mechanism to spectrum allocation problems.  In~\cite{chichin2016towards}, a two-sided combinatorial greedy allocation mechanism is applied to multi-unit heterogeneous  cloud exchange markets. In addition, 
iterative double auctions based on a decomposition scheme have been proposed for multi-unit heterogeneous energy trading environments~\cite{majumder2014efficient, faqiry2018double, kang2017enabling,iosifidis2015double, deng2014distributed}. In these double auctions, trading goods are distinct, indivisible items. In order to apply these double auctions to the charger sharing scheduling problem, the continuous scheduling time line has to be discretized, such that the charging time period can be converted to a set of distinct time units~\cite{wellman2001auction,kutanoglu1999combinatorial}.
However, to maintain time accuracy, the discretized time unit cannot be too large. Therefore,
this approach can generate a large number of distinct time units, which
inflicts heavy computation burden on double auctions in terms of bids evaluation, communication, and winner determination~\cite{wang2009constraint, wang2012requirement}.
In~\cite{wang2011due} and~\cite{wang2013service}, the authors use scheduling specific bidding language for decentralized scheduling problems, which models scheduling related constraints naturally and reduces computation costs. 
However, both papers focus only on one-sided settings where there is only one seller in the market.

In this paper, we design a price-based iterative double auction for the charger sharing scheduling (CSS) problem in two-sided sharing markets. In this auction,
charger owners (\textit{sellers}) submit asks to indicate their available charging times, locations and time unit costs. EV drivers (\textit{buyers}) place bids to express their charging time requirements and prices they are willing to pay. 
The
auction then allocates buyers to sellers through iterative bidding with the aim of maximizing the
social welfare which is the difference between EV drivers'
total values and charger owners' total costs. We assume that sellers and buyers follow myopic best-response strategies~\cite{parkes2006iterative, parkes2001iterative}, which means they place best-response bids to the current prices. This is a reasonable assumption for large markets with many participants and for bidders with bounded computation capacities and/or limited information~\cite{iosifidis2015double}. The main contributions are summarized as follows.
\begin{itemize}
    \item  We formulate a new CSS problem with the objective of maximizing the social welfare of both EV drivers and charger owners in a two-sided charger sharing market.
    \item We propose a price-based iterative double auction which guarantees individual rationality, budget balance and allocative efficiency. In addition, the designed mechanism ensures that reporting truthful charging time constraints is a weakly dominant strategy for EVs and charger owners. This mechanism only incurs a small communication overhead and clears the market without knowing the private information of the participants.
    \item We design a fast winner determination algorithm based on simulated annealing meta-heuristic. Equipped with this algorithm, the proposed double auction can compute high-quality solutions to large scale CSS problem with low computational cost.
\end{itemize} %

The rest of the paper is organized as follows. In Section~\ref{pro}, we describe the charger sharing scheduling problem and present its mathematical model. In Section~\ref{IDA}, we present the structure and components of the
proposed auction. A theoretical analysis
on the properties of the auction is provided in Section~\ref{tr} followed by a computational study in Section~\ref{na}. Finally,
in Section~\ref{c}, we conclude our work and discuss future improvements.
\section{The charger sharing scheduling problem}
\label{pro}
We consider a charger sharing market which consists of a set of private charger owners, a set of EV drivers (henceforth called sellers and buyers), and a scheduler. Each seller provides his or her charging service offer to the market. The offer of a seller includes a location, an available time window and a cost of one time unit charging in the available time window. Buyers have preferences over charging services offered by different sellers. Each buyer prefers a seller with the most convenient location, charging time and the lowest service cost. Those preferences are quantified by the highest prices they want to pay, which will be referred to as \textit{values}.
The decision of the scheduler is to allocate each buyer to a seller, such that the social welfare (the difference between the allocated buyers' values and the allocated sellers' costs) is maximized and the scheduling constraints of both buyers and sellers are satisfied. 

Let $\mathcal{M}$ be the set of sellers and let $\mathcal{N}$ be the set of buyers, where $m\in \mathcal{M}$ defines an arbitrary seller and $n\in\mathcal{N}$ an arbitrary buyer. Each seller $m$ submits his or her location and \textit{type} to the charger sharing market. Let $\boldsymbol{l}_m = (f_m, g_m)$ be the location of seller $m$, where $f_m$ and $g_m$ represent the latitude and the longitude of the location, respectively.
A type of each seller $m$ is defined as follows:
\begin{definition}
Each seller $m$ is characterized by a ``type" $\alpha_m =
(s_m, e_m, c_m)$, which denotes the service start time, service end time and the charging cost for one time unit, respectively.
\label{def1}
\end{definition}
We refer to $[s_m, e_m]$ as the available charging time window offered by seller $m$, during which the seller is available for providing its charger to EV drivers. More specifically, $s_m$ and $e_m$ are the earliest and latest service times of seller $m$. 
$c_m \in \mathbb{R}^+$ is the charging cost of seller $m$ for a time unit.
The cost consists of \textit{electricity cost} and \textit{parking cost}. Electricity cost is the cost of electricity consumed within a time unit for charging a vehicle. Parking cost refers to the one time unit cost associated with the operation of the charging space, property tax, insurance, maintenance or rental costs.

In this market, a buyer may have different \textit{types} for sellers due to the different travelling distances to those sellers.
A type of buyer $n$ is defined as follows:
\begin{definition}
Each buyer $n$ is characterized by a ``type"  $\theta_{n,m}= (a_{n,m}, d_{n,m}, r_{n,m}, v_{n,m})$ for a particular seller $m$, indicating the arriving time, departure time, required charging duration and the charging value or willingness to pay, respectively. The set of all possible types of buyer $n$ is denoted by $\boldsymbol \theta_n$.
\label{def2}
\end{definition}
We refer to $[a_{n,m}, d_{n,m}]$ as the available charging time window of buyer $n$ for seller $m$. $a_{n,m}$ and $d_{n,m}$ indicate the earliest time that buyer $n$ can start to charge at seller $m$ and the latest time by which this buyer has to finish at seller $m$, respectively.  Note that, for buyer $n$, if $a_{n,m} < s_{m}$, this buyer has to wait until the available charging time window of seller $m$ is open. This means the earliest possible start charging time of buyer $n$ at seller $m$ is $s_m$. On the other hand, if $d_{n,m} > s_m$ for a seller $m$, buyer $n$ has to finish at time $s_m$, even his or her latest departure time is $d_{n,m}$.  
$r_{n,m}\in \mathbb{R}^+$ is the required charging duration of buyer $n$ for seller $m$. 
A charging request is non-preemptive, i.e., once buyer $n$ is started charging at seller $m$, it must continue charging for $r_{n,m}$ time units.
Each buyer $n$ can have different values regarding the sellers he or she would like to charge at. 
The valuation $v_{n,m}\in \mathbb{R}^+$ indicates the maximum price for which  buyer $n$ is willing to pay for charging at seller $m$.

We assume that buyers are \textit{single-minded}~\cite{lehmann2002truth}. A single-minded buyer only has two possible states: either he or she obtains the entire charging duration $r_{n,m}$ for a particular seller $m$ within his or her available charging time window and derives a value of $v_{n,m}$, or the value is zero for any duration less than $r_{n,m}$. This means each buyer's charging request cannot be partially fulfilled.
\begin{exmp}
As shown in Fig.~\ref{example}, the charger sharing market has two sellers and one buyer. Seller 1 and seller 2 submit their locations $l_1$ and $l_2$ and types (13:00, 17:00, \$1.5) and (15:00, 19:00, \$1) to the market, respectively. Based on these information, Buyer 1 has a type $\theta_{1,1} =$ (12:00, 16:00, 2, \$4) for seller 1 and a type $\theta_{1,2} =$ (16:00, 20:00, 3, \$5) for seller 2. The feasible start charging time of buyer 1 is constrained by the time windows of sellers and also
itself. In this case, buyer 1's  start charging time is either 13:00 or 14:00 on seller 1, and 16:00 on seller 2. The feasible charging time windows of buyer 1 at both seller 1 and seller 2 are highlighted in grey.
\end{exmp}
\begin{figure}[!htbp]
\centering
\includegraphics[width=8.75cm]{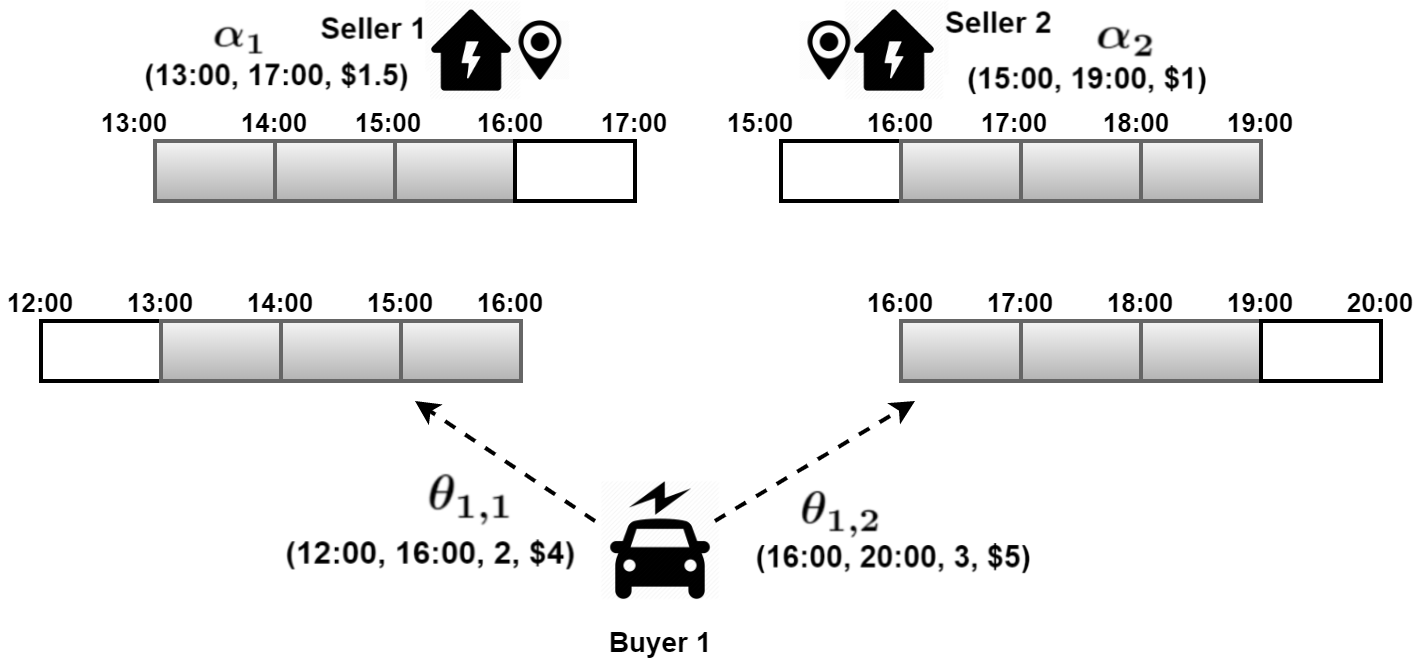}
\caption{Example of a CSS problem}
\label{example}
\end{figure} 
Given the types from sellers and buyers, the solution to the CSS problem is a \textit{schedule} which can be represented by a matrix ${L}\in \mathbb{R}^{|\mathcal{M}|\times|\mathcal{N}|}$, where each element $l_{n,m}$ denotes the start charging time of buyer $n\in \mathcal{N}$ at seller $m\in\mathcal{M}$.
Let $l_{n,m} \in \mathbb{R}^{+}$ if $n$ is allocated to $m$, and $l_{n,m}=-1$ otherwise. A \textit{feasible} schedule ${L}$ in our problem setting is defined as follows:
\begin{definition}
A schedule ${L}$ is feasible if for every element $l_{n,m}\geq 0$ it satisfies the following constraints:
\begin{enumerate}[label=(\roman*)]
\item Buyer $n$ cannot start before its arriving time, i.e.,
        $l_{n,m}\geq {a}_{n,m}$,
        \label{i}
    \item Buyer $n$ cannot finish after its departure time, i.e., $l_{n,m}\leq {d}_{n,m}-{r}_{n,m}$,
    \label{ii}
    \item Buyer $n$ can only start once, i.e., $\forall m,m'\in\mathcal{M}$: if $l_{n,m}\neq -1$ and $l_{n,m'}\neq -1$ then $m=m'$,
     \label{iii}
    \item If any buyers $n$ and $n'$ are allocated to the same seller, either $n$ must be finished before the start charging time of $n'$ or $n'$ must be finished before the start charging time of $n$, i.e., $\forall n,n'$: if $l_{n,m}\neq -1$ and $l_{n',m}\neq -1$, then $l_{n,m} + {r}_{n,m} \leq l_{n',m} + H\cdot(1-Y_{n,n',m})$ and $l_{n',m} + {r}_{n',m} \leq l_{n,m} + H\cdot Y_{n,n',m}$\footnote{$Y_{n,n',m}$ is the disjunctive variable: $Y_{n,n',m} = 1$ when $n$ is scheduled before $n'$ on $m$ and $Y_{n,n',m} = 0$ $n'$ is first. $H$ is a large positive constant which is used for the linearization of the logical constraint ``if'' \cite{rardin1998optimization}.},
     \label{iv}
    \item If buyer $n$ is allocated to seller $m$, the charging time should be within $m$'s available time window, i.e., ${s}_m\leq l_{n,m}\leq {e}_m - {r}_{n,m}$ ,
    \label{v}
    \item If buyer $n$ is allocated to seller $m$, the buyer's value cannot be less than the cost of the seller, i.e., $\forall n,m$: if $l_{n,m}\neq -1$, then ${{v}}_{n,m} \geq {r}_{n,m}\cdot{c}_m$.
    \label{vi}
\end{enumerate}
\label{defination}
\end{definition}
Let $\mathbbm{1}_{l_{n,m}\neq-1}$ be the indicator function that equals 1 if $l_{n,m}\neq-1$ is true and 0 if otherwise.
The social welfare of a feasible schedule ${L}$ is then defined as the difference between the sum of the allocated buyers' values and the sum of the allocated sellers' costs:
\begin{align}
\displaystyle {\sum_{n\in \mathcal{N}}\sum_{m\in \mathcal{M}}}
  ({v}_{n,m} - {r}_{n,m} \cdot {c}_n)\cdot \mathbbm{1}_{l_{n,m}\neq-1}
\end{align}
which is the value the scheduler aims to maximize.

Since we consider a game-theoretic setting, sellers and buyers are modeled as self-interested agents and may provide \textit{untrue} types if these are in their best interest. Here we need to define notations for reported types. Let
$\hat{\alpha}_m = ( \hat{s}_m, \hat{e}_m, \hat{c}_m )$ denotes the reported type of seller $m$ and $\hat {\boldsymbol{\theta}}_n = (\hat{\theta}_{n,m}: m\in \mathcal{M})$ 
denote buyer $n$'s reported types, in which a single reported type is represented as $\hat{\theta}_{n,m}= ( \hat{a}_{n,m}, \hat{d}_{n,m}, \hat{r}_{n,m}, \hat{v}_{n,m} )$. Let $\hat{\boldsymbol{\alpha}}_{-m}$ and $\hat{\boldsymbol{\theta}}_{-n}$ denote all seller and buyer reported types except that of $m$ and $n$, where ${\hat{\boldsymbol{\alpha}}} = (\hat{\alpha}_m, \hat{\alpha}_{-m})$ and ${\hat{\boldsymbol{\theta}}} = (\hat{\boldsymbol{\theta}}_{n}, \hat{\boldsymbol{\theta}}_{-n}$). Note that $\hat{\alpha}_m$ and $\hat{\boldsymbol\theta}_{n}$ may not be equal to $\alpha_m$ and ${\boldsymbol\theta}_{n}$. In this paper, we assume restricted reports which means
sellers cannot report an earlier service start time nor a later service end time and buyers cannot report an earlier arrival nor a later departure time. More formally, we assume $\hat{s}_m \geq {s}_m$ and $\hat{e}_m \leq e_m$ for all seller $m \in \mathcal{M}$ and for each buyer $n$, $\hat{a}_{n,m} \geq a_{n,m}$ and $ \hat{d}_{n,m} \leq d_{n,m}$, $\forall m\in \mathcal{M}$.

As the scheduler is unaware of the private information of sellers and buyers. Given the reported types, 
the scheduler might make sub-optimal decisions. 
Therefore, to efficiently allocate buyers to sellers, we propose a price-based iterative double auction to solve the CCS problem in charger sharing markets.

\section{Price-based Iterative Double Auction}
\label{IDA}
In this section, we propose a price-based iterative double auction (P-IDA) for the CSS problem.
Compared with one shot auctions, such as VCG~\cite{samadi2011optimal,de2015market,gerding2013two}, the iterative bidding structure of the proposed auction promises reduced computation at the auctioneer side and partial revelation of the private information at buyers' and sellers' sides~\cite{ parkes2001iterative,nisan2007algorithmic}. In addition, compared with single-sided multi-round auctions, such as those proposed in~\cite{parkes2006iterative,wang2011due, wang2013service}, the two-sided structure of P-IDA facilitates trade between two groups in one market,
which is more efficient than combining several single-sided auctions~\cite{xia2005solving}. 
\subsection{The P-IDA structure}
\begin{figure}[!htbp]
\centering
\includegraphics[width=7.5cm]{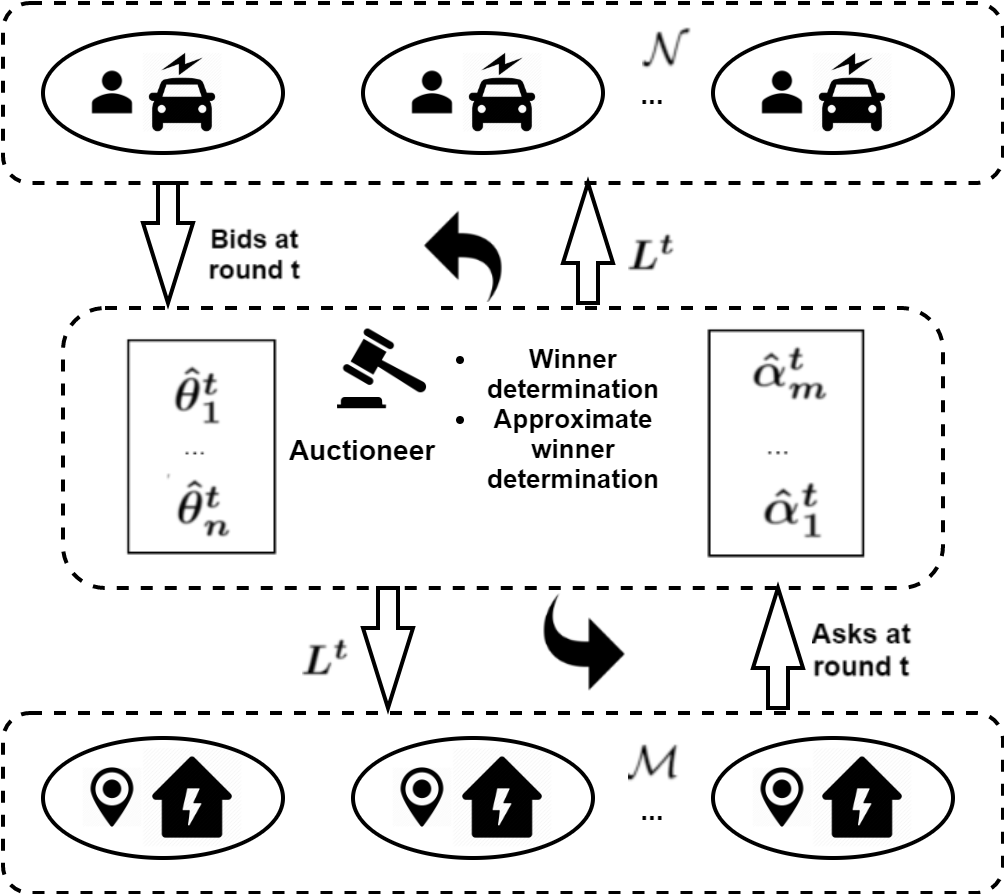}
\caption{The P-IDA structure}
\label{structure}
\end{figure} 
As shown in Fig.~\ref{structure},
P-IDA involves a group of sellers $\mathcal{M}$, a group of buyers $\mathcal{N}$ and an auctioneer. The auction proceeds in rounds and during each round $t$, sellers submit \textit{asks} to the auctioneer which include the charging time windows they offer for sale and their ask prices. Let $\hat{\alpha}^t_m = (\hat{s}_m, \hat{e}_m, p^{s,t}_m)$ denote the ask of seller $m$ at round $t$, where $p^{s,t}_m$ indicates the ask price in round $t$ for a time unit. At the same time, buyers submit \textit{bids} to the auctioneer which specify their available charging time windows, required charging duration, and their bid prices. Let $\hat{\theta}^t_{n,m} = (\hat{a}_{n,m}, \hat{d}_{n,m}, \hat{r}_{n,m},  \hat{r}_{n,m}\cdot p^{b,t}_{n,m})$ denotes the bid of buyer $n$ for seller $m$ at round $t$, where $p^{b,t}_{n,m}$ indicates the bid price of a time unit in round $t$. Thereafter, the auctioneer evaluates the bids and asks to determine a provisional feasible schedule $L^t$ with the objective of maximizing the social welfare. The auctioneer can also use an approximation algorithm for winner determination when the size of the model is too large to be optimally computed in a reasonable time. After receiving the provisional schedule from the auctioneer at round $t$, sellers and buyers will update their time unit prices $p^{s,t}_m$ and $p^{b,t}_{n,m}$ and submit the updated ones to the auctioneer for the next bidding round. The auction terminates when the termination conditions are satisfied.

In the following subsection, we formulate the winner determination model for the auctioneer to compute a social welfare maximizing provisional schedule during each round. 
\subsection{Winner determination}
The auctioneer solves the winner determination model during each round, computing a provisional schedule to maximize the social welfare.
The winner determination model in round $t$ is formulated as:
\begin{align}
    \max_{L^t}
    \displaystyle {\sum_{n\in \mathcal{N}}\sum_{m\in \mathcal{M}}} \hat{r}_{n,m}\cdot( p^{b,t}_{n,m} -  p^{s,t}_{m})\cdot\mathbbm{1}_{l^t_{n,m}\neq-1}
    \label{2}
\end{align}
with respect to the feasible schedule constraints (see Definition~\ref{defination} for details). $L^t$ is a provisional schedule at round $t$.
As multiple optimal solutions may exist, tie-breaking rules are needed for the auctioneer to decide which of the optimal solutions is selected. In this paper, ties are broken in favour of maximizing the number of trades and then at random.
\subsection{Approximate winner determination}
The winner determination model can be solved using standard integer programming optimization packages, such as ILOG CPLEX. 
However, due to the NP-completeness of the winner determination problem, computing the optimal schedule by using CPLEX is achievable in infeasible computational time for larger problem instances. Thus, the auction is not applicable to realistic scenarios in which agents require a timely schedule. Therefore, to further test the efficiency of the proposed double auction in large scale settings and to foster the use of P-IDA in real practical applications, an algorithm which is based on meta-heuristic simulated annealing (SA)~\cite{kirkpatrick1983optimization} is designed to solve the winner determination model during each bidding round for the auctioneer.
  
The SA algorithm starts by randomly generating an initial schedule based on the bids and asks submitted by buyers and sellers in each bidding round. Then the SA proceeds in several iterations. 
At the beginning of each SA iteration, multiple schedules are generated by a random permutation of the initial schedule. A schedule is always selected if it has a superior social welfare than that of the initial schedule, otherwise a schedule is accepted with a given decreasing possibility which is based on Boltzmann distribution~\cite{kirkpatrick1983optimization}. In the final iteration of SA, the schedule with the highest social welfare will be the provisional schedule for buyers and sellers in the current bidding round. Ties are broken at random. This ``P-IDA-SA" approach maintains the same incentive for buyers and sellers to follow myopic best-response bidding strategies~\cite{parkes2000iterative}.

As the winner determination model takes the bid prices and the ask prices as inputs, we then specify the bidding and price updating rules for buyers and sellers in the following subsection.
\subsection{Bidding and price updating rules}
Initially, buyers and sellers receive from the auctioneer a minimum and a maximum time unit price for charging. The minimum price is a reserve price which reflects the basic charger installation cost and electricity fees. Any bid prices lower than such price are deemed as invalid and will be rejected by the auctioneer.  The maximum time unit price is a highest reference value capturing a proportion of property tax, charger construction cost, maintenance cost and electricity fees. Any ask prices higher than such reference will be rejected by the auctioneer. Let $b_{min}\in \mathbb{R}^{+}$ be the minimum time unit price for all bids and $a_{max}\in \mathbb{R}^{+}$ be the maximum time unit price for all asks, with $b_{min} < a_{max}$. Based on $b_{min}$ and $a_{max}$, buyers and sellers set up their first round bid prices and ask prices.

\subsubsection{Buyer's and sellers' bidding rules}
At the beginning of round $t-1 (t>1)$, buyers need to compute a set of utility-maximizing bids among all their bids based on the bid prices and values. To compute such a set, a buyer $n\in\mathcal{N}$ solves the utility maximization problem: 
\begin{align}
\max_{\hat{\theta}_{n,m}^{t-1}\in {\hat{\boldsymbol{\theta}}^{t-1}_{n}}}  (v_{n,m} - \hat{r}_{n,m}\cdot p^{b,t-1}_{n,m})\cdot{\mathbbm{1}_{l^{t}_{n,m}\neq-1}}
\label{3}
\end{align}
and obtains a set of bids which equally maximize his or her utility, where $p^{b,t-1}_{n,m}$ is the bid price of buyer $n$ for seller $m$ at round $t-1$. That is, for any two bids $\hat{\theta}^{t-1}_{n,1}$ and $ \hat{\theta}^{t-1}_{n,2}$ in the utility maximization set: $v_{n,1} - \hat{r}_{n,1}\cdot p^{b,t-1}_{n,1} = v_{n,2} - \hat{r}_{n,2}\cdot p^{b,t-1}_{n,2}$. After obtaining utility maximization bids at round $t-1$, the buyer can have two bidding strategies: he or she can randomly pick one (\textit{single-bid} bidding strategy) or join them together as an XOR-bid (\textit{xor-bid} bidding strategy) \cite{nisan2006bidding} and submits it to the auctioneer. An XOR-bid: 
\begin{align*}
  \hat{\theta}^{t-1}_{n,1}\oplus \hat{\theta}^{t-1}_{n,2}\oplus \dots   \oplus\hat{\theta}^{t-1}_{n,m}
\end{align*}indicates that:
\begin{itemize}
    \item these bids equally maximize buyer $n$'s utility at round $t-1$ based on the bid prices and his or her values,
    \item at most one $\hat{\theta}^{t-1}_{n,k}$ can be accepted by the auctioneer.
\end{itemize}

Meanwhile, at the beginning of round $t-1$, a seller $m\in \mathcal{M}$ submits the ask to the auctioneer if this seller has available charging time.

After receiving the bids from buyers and asks from sellers at round $t-1$, the auctioneer solves the winner determination problem and sends the provisional schedule $L^{t-1}$ back to the buyers and sellers. At the beginning of round $t$, buyers and sellers need to update their bid prices and ask prices based on the provisional schedule $L^{t-1}$ at round $t-1$. The price updating rules of buyers and sellers are given as follows.
\subsubsection{Buyer's and seller's price updating rules}
If a buyer is not included in the provisional schedule $L^{t-1}$, this buyer has the following two price updating options in round $t$:
\begin{itemize}
    \item This buyer can increase his or her bid prices by $\epsilon>0$ on sellers that this buyer bid for at round $t-1$ or rounds before $t-1$. Here $\epsilon $ is the minimum bid-increment in the auction for buyers. As buyers are assumed to be rational in maximizing their utilities, they do not bid with an increment more than $\epsilon$.
    \item This buyer can keep his or her bid prices unchanged or make an increment less than $\epsilon$. It happens when the utility of all other bids are non-positive. In this case, the auctioneer will consider that this buyer has entered into the final bid status and the buyer is forbidden from increasing the bid prices of those bids in the future rounds. 
\end{itemize}
If a buyer is included in the provisional schedule $L^{t-1}$, this buyer can keep its bidding price unchanged at round $t$. That is, the buyer is allowed to repeat the same bid at round $t$. Note here, for a buyer with the xor-bid bidding strategy, the repeated bid at round $t$ should be the awarded one in XOR-bid at round $t-1$. However, the auctioneer 
can also choose to allow the buyer to repeat its XOR-bid in future rounds until the auction terminates. The purpose for this \textit{xor-bid-repeating} is to boost the efficiency of the auction. 
After updating the bid prices, buyers recompute their utility-maximizing bids by using~(\ref{3}). If there are multiple utility-maximizing bids,
this buyer can randomly pick one or joint them as XOR-bid for the next bidding round $t$. 
\begin{exmp}
Consider a buyer $n\in \mathcal{N}$ with xor-bid bidding strategy submits a bid $\hat{\theta}^1_{n,k}$ to the auctioneer in the first round. Suppose after computing the winner determination problem by the auctioneer, he or she is not included in $L^1$. Buyer $n$ will increase $p^{b,1}_{n,k}$ to $p^{b,1}_{n,k} + \epsilon$. After updating the bid price, buyer $n$ recomputes the utility maximization bids and obtains two bids $\hat{\theta}^1_{n,k}$ and $ \hat{\theta}^1_{n,k'}$. Thereafter, in the second round, he or she submits  $\hat{\theta}^2_{n,k'} \oplus\hat{\theta}^2_{n,k}$ to the auctioneer, where $\hat{\theta}^2_{n,k'}= (\hat{a}_{n,k'}, \hat{d}_{n,k'}, \hat{r}_{n,k'}, p^{b,2}_{n,k'})$ and $\hat{\theta}^2_{n,k}= (\hat{a}_{n,k}, \hat{d}_{n,k}, \hat{r}_{n,k}, p^{b,2}_{n,k})$. In this case, $p^{b,2}_{n,k'} = p^{b,1}_{n,k'}$ and $p^{b,2}_{n,k}= p^{b,1}_{n,k}+w\cdot\epsilon$.  If $0\leq w< 1$, buyer $n$ cannot increase his or her bid price on $k$ in the future round. As buyers are rational utility maximization agents, they will not bid with a $w>1$.
\end{exmp}
Given a provisional schedule $L^{t-1}$, 
if a seller sells all of his or her available charging time at round $t-1$, the seller repeats his or her ask at round $t$. 

If, on the other hand, a seller $m\in\mathcal{M}$ still has available charging time at round $t-1$, this seller has the following two ask price updating options in round $t$:
\begin{itemize}
    \item This seller can decrease the time unit price by $\epsilon >0$ at round $t$, where $\epsilon$ is the minimum ask-decrement in the auction for sellers. As sellers are assumed to be rational in maximizing their utilities, they do not bid with an decrement more than $\epsilon$.
    \item This seller can keep the ask price unchanged or make a decrement less than $\epsilon$. This happens when $p^{s,t-1}_{m}=c_m $ or $p^{s,t-1}_{m}-c_m \leq \epsilon$. In this case, the seller is not allowed to decrease his or her ask prices in the future rounds. 
\end{itemize}

\subsection{Termination condition and trading prices}
Once the auctioneer receives the bids and asks from buyers and sellers, the auctioneer checks the termination condition. 
The auction terminates when:
\begin{itemize}
    \item (T1) all
buyers and sellers submit the same bids and asks in two consecutive rounds.
\end{itemize}
On termination, the provisional schedule becomes the final schedule, and buyers pay their final bid prices to the auctioneer. Each seller's reimbursement is the sum of the allocated buyers' bid prices. 

Let $t^T$ denotes the final bidding round. Given the schedule and the prices at $t^T$, the utility of a buyer $n$ is defined as:
\begin{align}
    u^{b}_{n}(L^{t^T}, \hat{\boldsymbol{\theta}}^{t^T}_n) = \begin{dcases} 
                v_{n,m}-p^{b,t^T}_{n,m}\cdot\hat{r}_{n,m} & \text{if }  {\mathbbm{1}_{l^{t^T}_{n,m}\neq-1} =1}, \\
               0 & \text{otherwise}. \\
               \end{dcases} 
               \label{5}
\end{align}
Accordingly, a seller $m$'s utility is defined as:
\begin{align}
    u^{s}_{m}(L^{t^T},\hat{\alpha}^{t^T}) =  \begin{cases}
                \sum\limits_{n\in N}  \hat{r}_{n,m}\cdot(p^{b,t^T}_{n,m}-c_{m}) & \text{if }  {\mathbbm{1}_{l^{t^T}_{n,m}\neq-1} =1}, \\
               0 & \text{otherwise}.  \\
               \end{cases} 
               \label{6}
\end{align}

\subsection{Algorithm implementation}
\begin{algorithm}
\caption{Price-based iterative double auction}\label{euclid}
\label{al}
\hspace*{\algorithmicindent} \textbf{Input: $\boldsymbol{l}_m , \hat{\alpha}_m, \hat{\boldsymbol{\theta}}_{n}, w (0<w\leq1)$}, $\forall m\in \mathcal{M}, \forall n\in \mathcal{N}$  \\
    \hspace*{\algorithmicindent} \textbf{Output: ${L}^t$ }
\begin{algorithmic}[1]
\State $t \gets 1$;
\State Initialize $p^{b,1}_{n,m}$, $p^{s,1}_{m},\epsilon$;
\State flag $\gets 0$;
\While{flag $=0$}
\ForAll{$n\in \mathcal{N}$}
\If{$t>1\wedge$ $n$ is not included in $L^{t-1}$}
\ForAll{$\hat{\theta}^{t-1}_{n,m}$ submitted at round $t-1$}
\State $p^{t}_{n,m}=p^{t-1}_{n,m}+w\cdot\epsilon$;
\EndFor
\EndIf
\State\textbf{end}
\State Solve utility-maximizing bids by~(\ref{3});
\State Send bids to the auctioneer;
\EndFor
\State\textbf{end}
\ForAll{$m\in \mathcal{M}$}
\If{$t>1\wedge$ $m$ has available time in round $t-1$}
\State $p^{t}_m = p^{t-1}_m -w\cdot\epsilon$;
\EndIf
\State\textbf{end}
\State Send the ask to the auctioneer;
\EndFor
\State\textbf{end}
\If {T1 is satisfied}
\State $flag \gets1$;
\State break;
\EndIf
\State\textbf{end}
\State The auctioneer computes ${L}^t$ by (\ref{2}) with respect to feasible schedule constraints;
\State The auctioneer sends the provisional schedule ${L}^t$ to buyers and sellers;
\State $t\gets t+1$;
\EndWhile
\State\textbf{end}
\State The auctioneer collects payment $p^{b,t}_{n,m}$ from each buyer and reimburse to the corresponding sellers;
\end{algorithmic}
\end{algorithm}
The entire P-IDA is described in details in Algorithm~\ref{al}. The auctioneer initializes the minimum bid price for buyers and maximum ask price for sellers (i.e., $b_{min}$ and $a_{max}$) and also the increment or decrement $\epsilon$ (Line 2). For such an initialization, we can choose any set of values that satisfy $b_{min}<a_{max}$. Before the bid starts, all sellers set up their initial ask prices $p^{s,1}_m$ with $p^{s,1}_m\leq a_{max}$. The auctioneer then broadcasts the locations and asks of the participating sellers to the buyers. After receiving this information, buyers compute their bids for each feasible seller. The initial bid price $p^{b,1}_{n,m}$ for each buyer should satisfy $p^{b,1}_{n,m}\geq b_{min}$.
Then
the auctioneer will iteratively compute the provisional schedule given the bids and asks until the termination condition is satisfied. In the first bidding round, each seller submits his or her asks to the auctioneer and each buyer obtains the utility-maximizing bids among his or her bids by solving~(\ref{3}) and sends the bid (single-bid or xor-bid) to the auctioneer. The auctioneer generates a provisional feasible schedule ${L}^t$ by solving the winner determination problem and sends $L^t$ back to the buyers and sellers. 
After receiving the provisional schedule ${L}^t$, the unscheduled buyers and the sellers who still have available charging time will update their prices. After receiving the updated prices, the auctioneer checks the termination condition. If the termination condition T1 is not satisfied, the auctioneer takes the updated prices from buyers and sellers as inputs and computes the provisional schedule for the next bidding round. When the auction terminates, the auctioneer determines the final payments
and reimbursements for buyers and sellers. The last round provisional schedule will be the final schedule.

Compared with one-shot VCG auction, the proposed P-IDA has two main advantages. Firstly, in solving the CSS problem, P-IDA has improved computational properties. The VCG is a sealed bid auction which motivates agents to submit their complete valuations truthfully and computes optimal solutions~\cite{ausubel2006lovely}. However, despite its theoretical elegance, it has its limitations in terms of implementation~\cite{sandholm2002algorithm}. Specifically, from the auctioneer's side in solving the CSS problem, the implementation of VCG auction requires $|\mathcal{M}|+|\mathcal{N}|+1$ NP-complete optimization problems. Therefore, the computational cost of the VCG auction is prohibitively expensive if the auction is applied to a CSS problem with nontrivial-size.  However, the iterative bidding structure of P-IDA can help to distribute the computation in the auction~\cite{parkes2001auction, parkes2000iterative}. Although the winner determination problem during each bidding round remains NP-complete, the problem instances in P-IDA are much smaller than that in VCG as buyers only bid for a small subset of sellers in each round, which largely reduces the computational complexities~\cite{parkes2001iterative}.
In addition, P-IDA preserves the privacy of buyers and sellers. In the VCG, agents need to submit their complete valuations~\cite{parkes2001auction} to compute a final schedule. However, in P-IDA, agents are not required to submit complete and exact information about their private information.
\section{Theoretical analysis}
\label{tr}
In this section, we analyze the economic properties of P-IDA. We prove that P-IDA is budget-balanced and individually rational. Budget balance ensures that the auction requires no outside subsidy which incentives the auctioneer (or the platform) to organize such an auction.  Individual rationality implies that participants are never worse off by participating in the auction which attracts both buyers and sellers to participate in this auction. In addition, we prove that under this auction the weakly dominant strategies for buyers and sellers are to truthfully reveal their charging time constraints.
This property simplifies the strategic space of the participating buyers and sellers, and enables more efficient allocations.
\begin{rem}
P-IDA is budget-balanced because the total payments collected from the buyers is equal to the total reward assigned to the sellers.
\end{rem}
\begin{theo}
P-IDA is individually rational for all participating buyers and sellers.
\end{theo}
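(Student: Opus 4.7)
The plan is to show that at termination each participant receives non-negative utility. Non-participating or unallocated agents trivially have utility $0$ by definitions (\ref{5}) and (\ref{6}), so the argument reduces to verifying non-negativity on the pairs $(n,m)$ with $l^{t^T}_{n,m}\neq -1$.

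For an allocated buyer $n$, I would establish by induction on the bidding round $t$ that any active bid satisfies $v_{n,m} - \hat{r}_{n,m}\cdot p^{b,t}_{n,m}\geq 0$. The base case uses $p^{b,1}_{n,m}\geq b_{min}$ together with the rational selection of initial bids via the utility-maximization problem (\ref{3}); a bid with negative utility is never utility-maximizing for a myopic best-responder. The inductive step appeals directly to the buyer's price-updating rule: a full $\epsilon$-increment is taken only when positive utility would remain afterwards, and otherwise the buyer enters final bid status and the price is frozen. In either branch, non-negativity of $v_{n,m} - \hat{r}_{n,m}\cdot p^{b,t}_{n,m}$ is preserved from round $t-1$ to round $t$. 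Substituting $t = t^T$ into (\ref{5}) yields $u^b_n(L^{t^T},\hat{\boldsymbol{\theta}}^{t^T}_n)\geq 0$.

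For an allocated seller $m$, the argument splits into two pieces. First, the seller's ask-decrement rule permits a full $\epsilon$-decrement only when $p^{s,t-1}_m - \epsilon \geq c_m$, and otherwise freezes the ask at (or arbitrarily close to) $c_m$; a simple induction therefore gives $p^{s,t^T}_m\geq c_m$. Second, any pair $(n,m)$ with $l^{t^T}_{n,m}\neq -1$ in the optimal solution of the winner-determination program (\ref{2}) must satisfy $p^{b,t^T}_{n,m}\geq p^{s,t^T}_m$: otherwise that pair contributes a strictly negative term, while setting $l^{t^T}_{n,m}=-1$ preserves feasibility under Definition~\ref{defination} (no constraint forces allocation) and strictly increases the objective, contradicting optimality. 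The tie-breaking rule that favours more trades cannot save a strictly worse schedule. Chaining these two inequalities gives $p^{b,t^T}_{n,m}\geq c_m$, and summing the per-buyer contributions in (\ref{6}) delivers $u^s_m\geq 0$.

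The main obstacle is the interplay between the iterative price updates and the winner-determination step. In particular, a buyer's final-round bid price may have been set several rounds before $t^T$ (having been allocated in intermediate rounds and then merely repeating the bid under the xor-bid-repeating rule), so the inductive invariant must persist across those repeat rounds; this is handled by observing that a repeated bid leaves its price unchanged and thus inherits non-negativity from the round in which the bid was last actively updated. A secondary subtlety is that feasibility constraint (vi) of Definition~\ref{defination} is stated in terms of the \emph{true} $v_{n,m}$ and $c_m$ rather than the reported prices; my argument side-steps this by leaning on the objective-based optimality argument for (\ref{2}), which only invokes quantities the auctioneer actually observes.
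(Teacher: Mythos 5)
Your proof is correct and follows the same overall skeleton as the paper's: unallocated agents trivially get zero utility, an allocated buyer's utility is non-negative because $p^{b,t^T}_{n,m}\cdot\hat{r}_{n,m}\leq v_{n,m}$, and an allocated seller's utility is non-negative via the chain $p^{b,t^T}_{n,m}\geq p^{s,t^T}_m\geq c_m$. The differences are in how the links of that chain are justified. The paper simply asserts that $p^{b,t^T}_{n,m}\cdot\hat{r}_{n,m}\leq v_{n,m}$ and $p^{s,t^T}_m\geq c_m$ "always hold," whereas you supply the round-by-round inductions on the price-updating rules (including the observation that repeated bids preserve the invariant), which is a genuine improvement in rigor. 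For the middle inequality $p^{b,t^T}_{n,m}\geq p^{s,t^T}_m$, the paper invokes feasible schedule constraint~\ref{vi} of Definition~\ref{defination}, while you derive it from optimality of the winner determination program (\ref{2}): a pair contributing a strictly negative term could be dropped without violating feasibility, improving the objective. Your route is arguably more robust, since constraint~\ref{vi} as literally stated compares the true value $v_{n,m}$ with the true cost $c_m$ — quantities the auctioneer never observes — and only yields the needed price inequality if one reads it as being enforced on the reported round-$t$ prices, which the paper does implicitly. Both arguments are valid; yours buys independence from that reinterpretation at the cost of a short optimality argument, and your explicit handling of the tie-breaking rule closes a loophole the paper does not mention.
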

\begin{proof}
This theorem will be established separately for the buyers and the sellers. Note that not participating in the auction leads to a zero utility for any buyers and sellers as they will not receive or offer any charging services nor they will pay any kind of fee. 
For an arbitrary buyer $n\in \mathcal{N}$, upon termination of the auction, if he or she is not included in the final schedule, then the utility of buyer $n$ is zero: $u^b_n =0$ (see Equation (\ref{5})).

If on the other hand, buyer $n\in \mathcal{N}$ is allocated to an arbitrary seller $m\in \mathcal{M}$ when the auction terminates, then the utility of buyer $n$ is $v_{n,m}-p^{b,t^T}_{n,m}\cdot \hat{r}_{n,m}$, where $p^{b,t^T}_{n,m}$ is buyer $n$'s last round bid price. Since $p^{b,t^T}_{n,m}\cdot \hat{r}_{n,m} \leq v_{n,m}$ always holds, it follows that: 
\begin{align*}
   u^b_n = v_{n,m}-p^{b,t^T}_{n,m}\cdot \hat{r}_{n,m} \geq 0.
\end{align*}
Therefore, it is concluded that P-IDA is individually rational for all participating buyers.   

If an arbitrary seller $m$ is not included in the final schedule, then $u^s_m = 0$ (see Equation (\ref{6})). If, on the other hand, seller $m$ is included in the final schedule, to satisfy the feasible schedule constraint~\ref{vi}, we have ${p}^{b,t^T}_{n,m}- p^{s,t^T}_m\geq 0$ always holds, where $p^{s,t^T}_m$ is seller $m$'s ask price in the last round. Since $p^{s,t^T}_m \geq c_m$, it follows that
\begin{align*}
   {p}^{b,t^T}_{n,m}- c_m \geq {p}^{b,t^T}_{n,m}-p^{s,t^T}_m \geq 0.
\end{align*}
Therefore, 
\begin{align*}
   u^s_m =  \sum_{n\in \mathcal{N}}  (p^{b,t^T}_{n,m}-c_{m})\cdot \hat{r}_{n,m} \geq {p}^{b,t^T}_{n,m}-c_m \geq 0.
\end{align*}
 Thus we conclude that P-IDA is individually rational for all participating sellers.
\end{proof}
To prove truthfulness of P-IDA, we need to show that for an arbitrary buyer $n\in \mathcal{N}$ and seller $m\in \mathcal{M}$, regardless of other agents' bids and asks, $n$ and $m$ can never gain by submitting untruthful charging time constraints, subject to the restricted reports assumption. 

We first prove that this mechanism is truthful for buyers in reporting their charging time constraints. We start by showing that if truthful reveals of $a_{n,k}, d_{n,k}, r_{n,k}$ do not cause buyer $n$ to be allocated to seller $k$, then strategically reveals of these variables can never cause the buyer to be allocated to seller $k$.
We break this part of proof into two lemmas, first showing that it holds for the arriving time regardless of other variables (Lemma 1), and then for departure time and required charging duration (Lemma 2). According to the single-minded assumption, a buyer will not report a length of charging shorter than the true length which means for each buyer $n$, $\hat{r}_{n,m} \geq r_{n,m}$, $\forall m\in \mathcal{M}$.
\begin{lemma}
In P-IDA, it is a weakly dominant strategy for buyers to truthfully report their arriving times.
\end{lemma}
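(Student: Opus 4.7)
The plan is to fix an arbitrary buyer $n$ and an arbitrary seller $m$, to fix the collection of all bids and asks submitted by every other participant in every round, and to argue that reporting $\hat{a}_{n,m}=a_{n,m}$ yields weakly greater final utility than any deviation $\hat{a}_{n,m}>a_{n,m}$ (the only direction allowed by the restricted-reports assumption). I would couple two parallel executions of P-IDA, call them $A$ and $B$: in both, buyer $n$ uses the same initial bid price, the same reported duration $\hat{r}_{n,m}$, the same reported deadline $\hat{d}_{n,m}$, and the same price-updating rule; the only difference is that in $A$ the buyer reports the true $a_{n,m}$, whereas in $B$ the buyer reports $\hat{a}_{n,m}>a_{n,m}$.

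The central structural observation is that the reported arrival time enters the winner-determination model only through constraint (i) of Definition~\ref{defination}, $l_{n,m}\geq \hat{a}_{n,m}$, while the social-welfare objective~(\ref{2}) is independent of it. Consequently, for any schedule $L$, feasibility under $A$ is a strict weakening of feasibility under $B$, and any schedule that does not allocate $n$ to $m$ has identical feasibility under both executions. From this I would derive, by induction on the round $t$, the invariant: if buyer $n$ is allocated to $m$ in the provisional schedule $L^{t}_{B}$, then $n$ is allocated to $m$ in $L^{t}_{A}$, and the corresponding bid price under $A$ is no larger than under $B$. The inductive step uses that the auctioneer under $A$ can feasibly adopt $L^{t}_{B}$, that the objective value of any schedule is the same whether the constraint is $\geq a_{n,m}$ or $\geq \hat{a}_{n,m}$, and that the tie-breaking rule first maximises surplus and then maximises the number of accepted trades.

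Combining the invariant with the price-updating rule, which raises $p^{b,t}_{n,m}$ only in rounds where $n$ is unallocated, yields that $n$'s bid-price trajectory under $A$ is componentwise no higher than under $B$. At termination I would conclude using~(\ref{5}) as follows: if $n$ is unallocated under $B$, then $u^{b}_{n}(B)=0\leq u^{b}_{n}(A)$ by Theorem~1 (individual rationality); if $n$ is allocated under $B$ to some $m$, the invariant guarantees that $n$ is allocated to $m$ under $A$ at a final bid price $p^{b,t^{T}}_{n,m}(A)\leq p^{b,t^{T}}_{n,m}(B)$, so $u^{b}_{n}(A)=v_{n,m}-p^{b,t^{T}}_{n,m}(A)\hat{r}_{n,m}\geq v_{n,m}-p^{b,t^{T}}_{n,m}(B)\hat{r}_{n,m}=u^{b}_{n}(B)$, giving weak dominance.

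The main obstacle is the inductive step of the invariant: when the auctioneer under $A$ has strictly more feasible schedules available, it is not a priori obvious that the welfare-maximising choice retains $n$'s allocation rather than reshuffling the slot to a more profitable trade. The argument that rescues this is that any reshuffle schedule that excludes $n$ is arrival-time agnostic and therefore already available to the auctioneer in execution $B$; if such a reshuffle had strictly greater surplus than allocating $n$, the auctioneer would have chosen it in $B$ as well, contradicting that $n$ was in $L^{t}_{B}$. Dealing cleanly with this case and with the ``max trades'' tie-breaking is the crux of the proof.
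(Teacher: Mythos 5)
Your overall strategy differs substantially from the paper's. The paper's proof is a short, essentially static case analysis: it fixes the bid $\hat{\theta}'_{n,k}$ at termination, splits on whether $l_{n,k}=-1$, and argues that reporting $\hat{a}_{n,k}>a_{n,k}$ only shrinks the set of feasible start times, so a misreport can forfeit an allocation (sub-case $a_{n,k}\leq l_{n,k}<\hat{a}_{n,k}$) but, by the contradiction in Case 2, can never create one; the schedule and prices are implicitly treated as otherwise unchanged. You instead attempt a dynamic coupling of two full executions with a round-by-round invariant. The shared key insight --- that the reported arrival enters only through constraint (i) of Definition~\ref{defination} and that relaxing it weakly enlarges the feasible set --- is the same, but your route demands much more.

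The problem is that the extra machinery you introduce is exactly where your argument is incomplete. First, your opening move of fixing ``the collection of all bids and asks submitted by every other participant in every round'' is not legitimate in an iterative auction: other agents' round-$t$ bids are functions of the provisional schedule $L^{t-1}$, which can differ between executions $A$ and $B$ as soon as the provisional allocations diverge. For a dominant-strategy claim you may fix the other agents' \emph{strategies}, not their realized bid trajectories, so the inductive hypothesis must also control the divergence of everyone else's prices --- your invariant as stated does not. Second, the inductive step itself is left open: you correctly dispose of the case where a schedule excluding $n$ has \emph{strictly} greater surplus (it was already feasible in $B$), but in the tie case the auctioneer breaks ties by maximizing the number of trades and then \emph{at random}, so the coupled executions can select different optima and the invariant ``allocated in $L^t_B\Rightarrow$ allocated in $L^t_A$'' can fail for a particular realization. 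You flag this as ``the crux'' but do not resolve it; as written the proof therefore has a genuine gap. Note that the paper sidesteps all of this by arguing only about feasibility of the final allocated start time under the two reports --- a weaker but self-contained argument --- so if you want to retain your stronger dynamic statement you need either to make the claim distributional (weak dominance in expectation over tie-breaking) or to strengthen the invariant to cover the other agents' price trajectories and the tie-breaking randomness explicitly.
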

\begin{proof}
Let $\boldsymbol{\hat{\theta}}'_n = (\hat{\theta}'_{n,k}: k\in \mathcal{M})$, where each element $\hat{\theta}'_{n,k} = ({a}_{n,k}, \hat{d}_{n,k}, \hat{r}_{n,k},\hat{r}_{n,k}\cdot{p}^{b,t}_{n,k})$ indicates that buyer $n$ truthfully reports his or her arriving time $a_{n,k}$.
To prove this lemma we need to show that 
\begin{align}
    u^b_n (\boldsymbol{\hat{\theta}}'_n, \boldsymbol{\hat{\theta}}_{-n},\boldsymbol{\theta}_n) \geq u^b_n (\boldsymbol{\hat{\theta}}_n, \boldsymbol{\hat{\theta}}_{-n},\boldsymbol{\theta}_n)
    \label{10}
\end{align}
holds for all buyers.
First we randomly select one bid $\hat{\theta}'_{n,k}$ from buyer $n$ and assume that this is the bid upon termination of the auction. Then we consider the following two cases:

Case 1:
If $l_{n,k}\neq -1$, which means buyer $n$ is allocated to seller $k$ in the final schedule, then $l_{n,k} \geq a_{n,k}$. Given restricted reports assumption, we have $\hat{a}_{n,k} \geq a_{n,k}$ and the following two possibilities:

1). $a_{n,k}\leq l_{n,k}<\hat{a}_{n,k}$, then buyer $n$ cannot be allocated by reporting $\hat{\theta}_{n,k} = (\hat{a}_{n,k}, \hat{d}_{n,k}, \hat{r}_{n,k},\hat{v}_{n,k})$. Hence, we can derive:
\begin{align*}
    u^b_n ({\hat{\theta}'}_{n,k}, \boldsymbol{\hat{\theta}}_{-n},\boldsymbol{\theta}_n) \geq u^b_n ({\hat{\theta}}_{n,k}, \boldsymbol{\hat{\theta}}_{-n},\boldsymbol{\theta}_n)= 0.
\end{align*}

2). $a_{n,k}\leq \hat{a}_{n,k}\leq l_{n,k}$, then buyer $n$ can be allocated by reporting $\hat{\theta}_{n,k} = (\hat{a}_{n,k}, \hat{d}_{n,k}, \hat{r}_{n,k},\hat{v}_{n,k})$. Thus:
\begin{align*}
    u^b_n ({\hat{\theta}'}_{n,k}, \boldsymbol{\hat{\theta}}_{-n},\boldsymbol{\theta}_n) = u^b_n ({\hat{\theta}}_{n,k}, \boldsymbol{\hat{\theta}}_{-n},\boldsymbol{\theta}_n)\geq 0.
\end{align*}

Case 2:
If $l_{n,k}=-1$, which means buyer $n$ is not included in the final schedule. We prove this case by using contradiction. First, we assume  buyer $n$ is included in the final schedule by reporting $\hat{\theta}_{n,k}$. Let $\hat{l}_{n,k}$ be the allocated starting time. Therefore, we have $\hat{a}_{n,k} \leq \hat{l}_{n,k} \neq -1$. Based on the restricted reports assumption, the following inequality holds:
\begin{align*}
    a_{n,k}\leq \hat{a}_{n,k}\leq \hat{l}_{n,k}.
\end{align*}
It means that buyer $n$ will be allocated to $k$ in the final schedule by reporting $\hat{\theta}'_{n,k}$, a contradiction. Therefore, \begin{align*}
    u^b_n ({\hat{\theta}'}_{n,k}, \boldsymbol{\hat{\theta}}_{-n},\boldsymbol{\theta}_n) = u^b_n ({\hat{\theta}}_{n,k}, \boldsymbol{\hat{\theta}}_{-n},\boldsymbol{\theta}_n)= 0.
\end{align*}

As buyer $n$ and seller $k$ are an arbitrary selection, we can conclude that Equation (\ref{10}) holds for all buyers. 
\end{proof}
\begin{lemma}
In P-IDA, it is a weakly dominant strategy for buyers to truthfully report their departure times and required charging duration.
\end{lemma}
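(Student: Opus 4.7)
The plan is to mirror the proof of Lemma~1, now showing that buyer $n$ cannot increase utility by jointly misreporting $d_{n,k}$ and $r_{n,k}$ for any seller $k$. I would introduce the candidate truthful deviation
\begin{align*}
\hat{\theta}''_{n,k} = (\hat{a}_{n,k},\,d_{n,k},\,r_{n,k},\,r_{n,k}\cdot p^{b,t}_{n,k}),
\end{align*}
collect these bids into $\boldsymbol{\hat{\theta}}''_n = (\hat{\theta}''_{n,m}:m\in\mathcal{M})$, and fix an arbitrary $k$ at which to analyze the deviation. The restricted-report assumption gives $\hat{d}_{n,k}\leq d_{n,k}$ and the single-minded assumption gives $\hat{r}_{n,k}\geq r_{n,k}$; hence $\hat{d}_{n,k}-\hat{r}_{n,k}\leq d_{n,k}-r_{n,k}$, so the truthful bid always offers a wider feasible start-time window and a shorter seller occupancy than any admissible strategic alternative. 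The goal is the analogue of~(\ref{10}) for this deviation.

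I would then split into two cases as in Lemma~1. In Case~1, where truthful reporting yields $l_{n,k}\neq -1$, I would subdivide on whether $l_{n,k}$ lies in the tighter strategic interval $[\hat{a}_{n,k},\hat{d}_{n,k}-\hat{r}_{n,k}]$ or in the strip $(\hat{d}_{n,k}-\hat{r}_{n,k},\,d_{n,k}-r_{n,k}]$. If $l_{n,k}$ lies inside the strategic window, both bids are feasible at the same slot and the strategic utility $v_{n,k}-\hat{r}_{n,k}p^{b,t}_{n,k}$ cannot exceed the truthful utility $v_{n,k}-r_{n,k}p^{b,t}_{n,k}$. Otherwise the strategic bid is infeasible at $l_{n,k}$ and yields either $0$ or a value bounded above by the nonnegative quantity $v_{n,k}-r_{n,k}p^{b,t}_{n,k}$, with nonnegativity supplied by the utility-maximizing rule~(\ref{3}).

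Case~2, with $l_{n,k}=-1$, is the crux and I would tackle it by contradiction, imitating the structure used for arrival times. Assume the strategic bid is allocated at some $\hat{l}_{n,k}\in[\hat{a}_{n,k},\hat{d}_{n,k}-\hat{r}_{n,k}]$; the chain $a_{n,k}\leq\hat{a}_{n,k}\leq\hat{l}_{n,k}$ together with $\hat{l}_{n,k}+r_{n,k}\leq\hat{l}_{n,k}+\hat{r}_{n,k}\leq\hat{d}_{n,k}\leq d_{n,k}$ certifies that $\hat{l}_{n,k}$ is also feasible for $\hat{\theta}''_{n,k}$ and that the seller's own window remains respected when the slot shrinks to length $r_{n,k}$. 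The main obstacle, and the place where the argument must go beyond Lemma~1, is that replacing $\hat{r}_{n,k}$ by $r_{n,k}$ simultaneously reduces $n$'s contribution $\hat{r}_{n,k}(p^{b,t}_{n,k}-p^{s,t}_k)$ in the winner-determination objective~(\ref{2}); feasibility alone therefore does not immediately force the truthful auctioneer to retain $n$.

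To close the gap I would observe that every schedule that excludes $n$ has identical objective value under the strategic and truthful reports, while the strategic-optimal schedule, after its $n$-slot is trimmed from $[\hat{l}_{n,k},\hat{l}_{n,k}+\hat{r}_{n,k}]$ to $[\hat{l}_{n,k},\hat{l}_{n,k}+r_{n,k}]$, remains feasible under truthful reports and can be augmented with any allocation that the strategic auctioneer could already have packed into the previously occupied interval. Because the strategic auctioneer weakly preferred the $n$-including schedule over every excluding alternative, and the same augmentation option was already available to it, the preference for including $n$ must persist under truthful reports; combined with the tie-breaking rule favouring the larger number of trades, the truthful winner determination must also allocate $n$, contradicting $l_{n,k}=-1$ and completing the argument.
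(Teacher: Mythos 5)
Your proof follows the same decomposition as the paper's: the same truthful-deviation profile $\hat{\theta}''_{n,k}$, the same split into the case where the truthful bid is allocated and the case where it is not, and a contradiction argument for the latter. Your Case~1 is in fact slightly more careful than the paper's, since you correctly observe that when both bids are feasible at the slot $l_{n,k}$ the comparison is $v_{n,k}-r_{n,k}\,p^{b,t}_{n,k}\geq v_{n,k}-\hat{r}_{n,k}\,p^{b,t}_{n,k}$; the paper asserts equality there, which fails when $\hat{r}_{n,k}>r_{n,k}$ because the payment scales with the reported duration.

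The problem is your repair of Case~2. You correctly identify the gap that the paper silently steps over: feasibility of the slot under the truthful report does not by itself force the welfare-maximizing auctioneer to allocate $n$, because trimming $\hat{r}_{n,k}$ to $r_{n,k}$ shrinks $n$'s contribution to the objective~(\ref{2}) from $\hat{r}_{n,k}(p^{b,t}_{n,k}-p^{s,t}_{k})$ to $r_{n,k}(p^{b,t}_{n,k}-p^{s,t}_{k})$. But your exchange argument does not close this gap. Write $W$ for the value of the strategic-optimal schedule containing $n$ and $W'$ for the best schedule excluding $n$ (whose value is indeed report-independent). Under the truthful report, the best schedule containing $n$ at the same slot is worth at least $W-(\hat{r}_{n,k}-r_{n,k})(p^{b,t}_{n,k}-p^{s,t}_{k})+A$, where $A\geq 0$ is whatever can be packed into the freed interval. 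That augmentation was \emph{not} available to the strategic auctioneer (the interval was occupied by $n$'s inflated slot), so the claim that ``the same augmentation option was already available to it'' is false; and, more importantly, nothing guarantees $A\geq(\hat{r}_{n,k}-r_{n,k})(p^{b,t}_{n,k}-p^{s,t}_{k})$. Hence $W\geq W'$ does not imply that the truthful auctioneer still prefers to include $n$, and the desired contradiction is not obtained. To be fair, the paper's own proof makes the same unjustified leap from slot feasibility to allocation, so you have located a genuine weakness in the lemma's argument; but the fix you propose does not repair it.
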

\begin{proof}
Let $\boldsymbol{\hat{\theta}}''_n = (\hat{\theta}''_{n,k}: k\in M)$, where each element $\hat{\theta}''_{n,k} = (\hat{a}_{n,k}, {d}_{n,k}, {r}_{n,k},{p}^{b,t}_{n,k})$ indicates that buyer $n$ truthfully reports his or her departure time $d_{n,k}$ and required charging time duration $r_{n,k}$. To prove this lemma, we need to show the following inequality holds for all buyers:
\begin{align}
     u^b_n (\boldsymbol{\hat{\theta}}''_n, \boldsymbol{\hat{\theta}}_{-n},\boldsymbol{\theta}_n) \geq u^b_n (\boldsymbol{\hat{\theta}}_n, \boldsymbol{\hat{\theta}}_{-n},\boldsymbol{\theta}_n). 
     \label{8}
     \end{align}
We randomly select one bid $\hat{\theta}''_{n,k}$ from buyer $n$ and assume that this is the submitted bid upon termination of the auction. Then we have the following two cases:

Case 1: $l_{n,k} \neq -1$, which means buyer $n$ is included in the final schedule by reporting $\hat{\theta}''_{n,k}$, thus we have:
\begin{align*}
    l_{n,k} \leq {d}_{n,k} -{r}_{n,k} \wedge l_{n,k} \leq e_k-{r}_{n,k}.
\end{align*}
According to the restricted reports and single-minded assumption, we have $\hat{d}_{n,k}-\hat{r}_{n,k} \leq {d}_{n,k} -{r}_{n,k}$. Therefore, the only possibility that buyer $n$ is included in the final schedule by reporting  $\hat{\theta}_{n,k}$ is that the following two inequalities are satisfied at the same time:
\begin{align*}
    l_{n,k} \leq \hat{d}_{n,k}-\hat{r}_{n,k}\leq {d}_{n,k} -{r}_{n,k},
\end{align*} and 
\begin{align*}
  l_{n,k}\leq e_k-\hat{r}_{n,k} \leq e_k-{r}_{n,k}.
\end{align*}
If (13) and (14) are satisfied, 
then:
\begin{align*}
    u^b_n ({\hat{\theta}''}_{n,k}, \boldsymbol{\hat{\theta}}_{-n},\boldsymbol{\theta}_n) = u^b_n ({\hat{\theta}}_{n,k}, \boldsymbol{\hat{\theta}}_{-n},\boldsymbol{\theta}_n)\geq 0.
\end{align*}
Otherwise, 
\begin{align*}
    u^b_n ({\hat{\theta}''}_{n,k}, \boldsymbol{\hat{\theta}}_{-n},\boldsymbol{\theta}_n) \geq u^b_n ({\hat{\theta}}_{n,k}, \boldsymbol{\hat{\theta}}_{-n},\boldsymbol{\theta}_n)= 0.
\end{align*}

Case 2: $l_{n,k}=-1$, which means buyer $n$ is not included in the final schedule by reporting
$\hat{\theta}''_{n,k}$. We prove this case by contradiction. We assume buyer $n$ is included in the final schedule by reporting $\hat{\theta}_{n,k}$. Then we have: 
\begin{align*}
    l_{n,k} \leq \hat{d}_{n,k} -\hat{r}_{n,k} \wedge l_{n,k} \leq e_k-\hat{r}_{n,k}.
\end{align*}
Given that $\hat{d}_{n,k} < d_{n,k}$ and $\hat{r}_{n,k} > r_{n,k}$, we derive:
\begin{align*}
   \hat{d}_{n,k} -\hat{r}_{n,k} \leq d_{n,k} -r_{n,k} \wedge e_k-\hat{r}_{n,k} \leq e_k-{r}_{n,k}.
\end{align*}
This immediately follows that:
\begin{align*}
  l_{n,k} \leq {d}_{n,k} -{r}_{n,k} \wedge l_{n,k} \leq e_k-{r}_{n,k}.  
\end{align*}
It means that buyer $n$ will be allocated to seller $k$ by reporting $\hat{\theta}''_{n,m}$, a contradiction.
Therefore, 
\begin{align*}
    u^b_n ({\hat{\theta}''}_{n,k}, \boldsymbol{\hat{\theta}}_{-n},\boldsymbol{\theta}_n) = u^b_n ({\hat{\theta}}_{n,k}, \boldsymbol{\hat{\theta}}_{-n},\boldsymbol{\theta}_n)= 0.
\end{align*}
As buyer $n$ and seller $k$ are an arbitrary selection, deriving from the above two cases, we can conclude that Equation~(\ref{8}) holds for all buyers.
\end{proof}
\begin{theo}
In P-IDA, it is a weakly dominant strategy for each buyer to truthfully report their available charging time windows and required charging time duration.
\end{theo}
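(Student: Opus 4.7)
The plan is to deduce Theorem 2 by concatenating Lemma 1 and Lemma 2 via a hybrid-report argument, rather than re-doing a case analysis from scratch. Fix an arbitrary buyer $n$, any profile $\hat{\boldsymbol{\theta}}_{-n}$ of the other participants' reports, and buyer $n$'s true type $\boldsymbol{\theta}_n$. Let $\hat{\boldsymbol{\theta}}_n$ be any restricted misreport of the arriving time, departure time, and required duration; I construct an intermediate report $\tilde{\boldsymbol{\theta}}_n$ whose component $\tilde{\theta}_{n,m}$ inherits $\hat{d}_{n,m}$, $\hat{r}_{n,m}$ and the bid price from $\hat{\theta}_{n,m}$ but replaces the arriving time by the truthful $a_{n,m}$.

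First, I apply Lemma 1 to the pair $(\hat{\boldsymbol{\theta}}_n, \tilde{\boldsymbol{\theta}}_n)$. Since Lemma 1 is proved for an arbitrary selected bid while treating the remaining reported quantities as fixed parameters, it applies verbatim here and gives
\begin{equation*}
u^b_n(\tilde{\boldsymbol{\theta}}_n, \hat{\boldsymbol{\theta}}_{-n}, \boldsymbol{\theta}_n) \;\geq\; u^b_n(\hat{\boldsymbol{\theta}}_n, \hat{\boldsymbol{\theta}}_{-n}, \boldsymbol{\theta}_n).
\end{equation*}
Next, starting from $\tilde{\boldsymbol{\theta}}_n$, I apply Lemma 2 to switch each $(\hat{d}_{n,m}, \hat{r}_{n,m})$ to the truthful pair $(d_{n,m}, r_{n,m})$. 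The proof of Lemma 2 only invokes the restricted-reports assumption on $d$ and the single-minded assumption on $r$; it never uses a specific value of the arriving time, so it carries over unchanged when that time is truthful. Writing $\boldsymbol{\theta}^\star_n$ for the resulting report (which agrees with $\boldsymbol{\theta}_n$ on the time constraints $a$, $d$, $r$ and possibly still misreports the price), this yields
\begin{equation*}
u^b_n(\boldsymbol{\theta}^\star_n, \hat{\boldsymbol{\theta}}_{-n}, \boldsymbol{\theta}_n) \;\geq\; u^b_n(\tilde{\boldsymbol{\theta}}_n, \hat{\boldsymbol{\theta}}_{-n}, \boldsymbol{\theta}_n).
\end{equation*}
Chaining the two inequalities delivers the weak-dominance statement of the theorem, because $\hat{\boldsymbol{\theta}}_n$ was an arbitrary restricted misreport of the time constraints.

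The main obstacle I expect is being careful that the hybrid step does not silently break any assumption invoked inside the two lemmas. Specifically, I must check that $\tilde{\boldsymbol{\theta}}_n$ still lives in the restricted-reports set (it does: truthful $a_{n,m}$ trivially satisfies $a_{n,m}\geq a_{n,m}$, and the $d,r$ entries are untouched so their earlier inequalities persist) and that P-IDA's bidding trajectory and provisional allocations respond monotonically to relaxation of the reported time window. Since feasibility constraints (i), (ii), (v) of Definition~\ref{defination} depend only on the reported $a, d, r, s, e$, enlarging a buyer's reported feasibility set can only add candidate start times without removing any, so the myopic best-response dynamics and the tie-breaking rule remain well-defined and the per-round winner determination is still a valid input for the next lemma invocation. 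Once this compatibility is recorded, the two lemmas paste together without further calculation.
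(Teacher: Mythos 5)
Your proposal is correct and takes essentially the same route as the paper, which also deduces the theorem directly from Lemmas 1 and 2. The only difference is that you make the composition explicit via the hybrid-report chaining $\hat{\boldsymbol{\theta}}_n \to \tilde{\boldsymbol{\theta}}_n \to \boldsymbol{\theta}^\star_n$, which supplies the glue the paper leaves implicit when it asserts the result ``follows directly'' from the two coordinate-wise lemmas.
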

\begin{proof}
The proof of this theorem follows directly from the above two lemmas. In Lemma 1 we proved that for each buyer, truthful reveal of his or her arriving time is a weakly dominant strategy and in Lemma 2 we proved that it is a weakly dominant strategy for each buyer to truthfully reveal his or her departure time and required charging duration. Therefore we conclude that the proposed P-IDA is truthful for all buyers in reporting their charging time constraints.

\end{proof}
\begin{theo}
In P-IDA, it is a weakly dominant strategy for each seller to truthfully report their available time windows.
\end{theo}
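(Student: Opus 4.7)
The plan is to mirror the two-lemma structure of the proof of Theorem 2, splitting Theorem 3 into one lemma for the service start time $s_m$ and one for the service end time $e_m$, then composing them. Introduce intermediate reports $\hat{\alpha}'_m = (s_m, \hat{e}_m, p^{s,t}_m)$ (truthful start, possibly untruthful end) and $\hat{\alpha}''_m = (s_m, e_m, p^{s,t}_m)$ (fully truthful window). It suffices to show in succession that
\begin{align*}
u^s_m(\hat{\alpha}_m, \hat{\alpha}_{-m}, \alpha_m) &\leq u^s_m(\hat{\alpha}'_m, \hat{\alpha}_{-m}, \alpha_m) \\
&\leq u^s_m(\hat{\alpha}''_m, \hat{\alpha}_{-m}, \alpha_m),
\end{align*}
since the rightmost utility corresponds to fully truthful reporting of the time window.

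For the start-time lemma I would run the same two-case analysis used in Lemma 1. Case 1: if under $\hat{\alpha}_m$ seller $m$ has a nonempty set of allocated buyers in the final schedule, then for every allocated buyer $n$ with start time $\hat{l}_{n,m}$, constraint (v) of Definition~\ref{defination} gives $\hat{l}_{n,m} \geq \hat{s}_m \geq s_m$ by the restricted-reports assumption, so the assignment $(n, m, \hat{l}_{n,m})$ remains feasible under the truthful report. The entire $\hat{\alpha}_m$-feasible schedule is therefore also $\hat{\alpha}'_m$-feasible with identical objective value; combined with the tie-breaking rule that favours more trades, the social-welfare-maximizing schedule under $\hat{\alpha}'_m$ retains at least as many buyers on $m$ at bid prices that are at least as high, so $u^s_m(\hat{\alpha}'_m,\cdot) \geq u^s_m(\hat{\alpha}_m,\cdot)$. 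Case 2: if $m$ has no allocation under $\hat{\alpha}_m$, then $u^s_m(\hat{\alpha}_m,\cdot) = 0$ and individual rationality from Theorem 1 gives $u^s_m(\hat{\alpha}'_m,\cdot) \geq 0$. The end-time lemma is symmetric, using $\hat{l}_{n,m} \leq \hat{e}_m - \hat{r}_{n,m} \leq e_m - \hat{r}_{n,m}$ together with $\hat{e}_m \leq e_m$, and establishes the second inequality.

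The main obstacle is justifying that the iterative bidding dynamics under the two reports produce compatible bid prices at termination: the feasible region of each round's winner-determination problem does change with $m$'s reported window, so in principle different buyers could be provisionally allocated in different rounds and update along different trajectories. I would handle this by appealing to the myopic best-response assumption stated in Section~\ref{IDA}: each buyer's price-updating rule depends only on whether he or she is included in the current provisional schedule, so shrinking $m$'s window can only remove feasible pairings without altering the updating rule itself, and the bid price of any buyer still winning a slot on $m$ can only be at least as high under the truthful report as under $\hat{\alpha}_m$. A straightforward induction on the bidding round then propagates the per-round feasibility comparison to the termination round, yielding the two inequalities above and thus the theorem.
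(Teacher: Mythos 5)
Your core argument is the same one the paper uses: under the restricted-reports assumption the true window $[s_m, e_m]$ contains any reported window $[\hat{s}_m, \hat{e}_m]$, so constraint (v) of the feasible-schedule definition is monotone --- any buyer feasibly schedulable on $m$ under the misreport remains feasibly schedulable under the truthful report. Where you differ is in packaging: the paper proves the whole theorem in one short contradiction argument (assume some buyer is allocated to $m$ under $\hat{\alpha}_m$ but not under $\alpha_m$; feasibility of that buyer's start time inside the narrower window immediately implies feasibility inside the wider one, contradiction), whereas you split the claim into a start-time lemma and an end-time lemma with intermediate reports $\hat{\alpha}'_m$ and $\hat{\alpha}''_m$, mirroring the two-lemma structure of the buyer-side theorem. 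The decomposition buys symmetry with the buyer proof but adds nothing mathematically, since both halves rest on the same one-line inclusion of windows. One genuine point in your favour: you explicitly flag that the iterative dynamics could send bid prices along different trajectories under the two reports, so that ``feasible under the truthful report'' does not automatically yield ``allocated, at a price at least as high, under the truthful report.'' The paper silently elides this --- its contradiction step jumps directly from feasibility to allocation. Your proposed fix, an induction on rounds using the myopic best-response assumption, is only sketched and would not close the gap as stated: the welfare-maximizing winner determination need not preserve which buyers are assigned to $m$ (nor their provisional-win status, hence their price updates) when $m$'s feasible region grows, so the per-round comparison you want to propagate is not actually established. Since the paper's own proof makes the identical leap without comment, your argument is at least as rigorous as the one it replaces, but neither version fully discharges that step.
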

\begin{proof}
To prove this theorem, we need to show that the following condition holds for all sellers:
\begin{align}
    u^s_m (\alpha_m, \hat{\alpha}_{-m}, \alpha_m) \geq u^s_m (\hat{\alpha}_m, \hat{\alpha}_{-m}, \alpha_m).
    \label{9}
\end{align}
Assume by contradiction this condition does not hold, which means there exists a buyer allocated to seller $m$ when $\hat{s}_m$ and $\hat{e}_m$ are untruthfully reported, but this buyer is not allocated to $m$ for a truthful report. Randomly select a buyer $n\in \mathcal{N}$ and assume $n$ is allocated to seller $m$ by reporting $\hat{\alpha}$. Based on the feasible schedule constraint~\ref{v}, we get:
\begin{align*}
    \hat{s}_k \leq l_{b,k} \leq \hat{e}-r_{b,k}.
\end{align*}
Given $\hat{s}_m\geq s_m$ and $\hat{e}_m \leq e_m$, we have the following inequality that always holds:
\begin{align*}
    {s}_k \leq l_{b,k} \leq {e}-r_{b,k}.
\end{align*}
This means that seller $m$ will have an allocated buyer $n$ by reporting $s_k$ and $e_k$, a contradiction.

As seller $m$ and buyer $n$ are an arbitrary selection, we can conclude that Equation~(\ref{9}) holds for all sellers.
\end{proof}
\section{Computational study}
\label{na}
In this section, we conduct a computational study to verify the performance of P-IDA in terms of efficiency, profit ratio and running time under different problem scales and various configurations. We first define the evaluation metrics, after that we describe the experiment settings and then analyze the experiment results.
\subsection{Experimental Evaluation Metrics}
The evaluation metrics are defined as follows:
\begin{itemize}
    \item \textbf{Efficiency} of scheduling, $eff(L)$, is measured as the ratio of the social welfare of the final schedule $L$ to the social welfare of the optimal schedule $L^*$
    \begin{align}
        eff (L) = \dfrac{\sum_{(i,j)\in L}(v_{i,j}-r_{i,j}\cdot c_{j})}{\sum_{(i,j)\in L^*}(v_{i,j}-r_{i,j}\cdot c_{j})}
    \end{align}
    where $L$ is the final schedule generated by the auction and $L^*$ is the optimal schedule which maximizes the social welfare.
\end{itemize}
\begin{itemize}
    \item \textbf{Profit ratio} of the auction, $pro (L)$, is measured as the sum of sellers' payoff in the final schedule $L$, as a fraction of the sum of the payoff in the optimal solution $L^*$ that maximizes the social welfare
    \begin{align}
        pro (L) =  \dfrac{\sum_{(i,j)\in L}(p^b_{i,j} - c_{j})\cdot r_{i,j}}{\sum_{(i,j)\in L^*}(v_{i,j}- r_{i,j}\cdot c_{j})}
    \end{align}
    where $p^b_{i,j}$ is the bid price of buyer $i$ for seller $j$ in the final schedule $L$ and $(b_{i,j} - c_{j})\cdot r_{i,j}$ is the payoff of seller $j$ in $L$.
    The profit ratio metric is designed to measure the degree to which the sellers make money by applying the auction.
\end{itemize}

\begin{itemize}
    \item \textbf{Running time} of the auction refers to the computation time needed to terminate the auction on a CSS problem instance.
\end{itemize}
\subsection{Settings}
The scheduling horizon is a 15 hour window on the next day from 07:00-22:00 and it is 
divided into 30 time units of half an hour.
For each seller $m$, the service start time $s_m$ is randomly drawn from a uniform distribution in the range of 07:00-14:00. We assume sellers offer at least 16 time units of charging service. Therefore, the available charging time unit is a uniform random number from $[16, \min\{30, 2\cdot(22-s_m)\}]$. Based on $s_m$ and the available charging time units, the service end time $e_m$ of seller $m$ can be directly generated. The time unit cost of each seller various from \$1 to \$2.5 with a step \$0.1.

For each buyer $n$, 
an arriving time $a_{n,m}$ is randomly drawn from a uniform distribution in the range of 07:00-21:30. According to a survey~\cite{santos2011summary}, the peak intervals of EV charging include 08:00-10:00, 12:00-14:00 and 18:00-20:00. 
Thus, we vary the relative proportions between
the buyers to represent varying levels of heterogeneity in the buyer
population. Specifically, our data is designed to ensure that 20\% of buyers are arrived during each of these three peak time intervals. 
A departure time $d_{n,m}$ of each buyer $n$ is randomly drawn from a uniform distribution in the range of $[a_{n,m}+1, \min\{a_{n,m} + 8, e_m\}]$ and required charging duration follows a uniform distribution over the interval $[1, \min\{d_{n,m}-a_{n,m}, \frac{C}{R}]$, where $C$ is a constant number, representing the maximum battery capacity (kWh) and $R$ is the constant charging
rate (kW) delivered at the charging station. For simplicity, we assume $C=80$ kWh for all buyers and $R = 10$ kW for all sellers. The time unit value (willingness to pay) of each buyer various from \$0.1 to \$5, with a step \$0.1. 
We assume the number of bids for each buyer is a random number from interval $[1, 0.4\cdot M]$, where $M$ is the number of sellers. 
 For testing P-IDA, we have defined 15 groups of testing data which are shown in TABLE~\ref{table}. For each group, 10 instances are randomly generated.

\begin{table}[h]
\caption{ Testing data} 
\centering  
\resizebox{0.45\textwidth}{!}{%
\begin{tabular}{p{0.08\textwidth}p{1.5cm}p{1.5cm}p{1.5cm}}
\hline
\hline
Group &Sellers &Buyers & \# Instances\\
\hline
1 & 4 & 5 & 10\\
2 & 4 & 10 & 10\\
3 & 4 & 15 & 10\\
4 & 4 & 20 & 10\\
5 & 5 & 5 & 10\\
6 & 5 & 10 & 10\\
7 & 5 & 15 & 10\\
8 & 5 & 20 & 10\\
9 & 6 & 5 & 10\\
10 & 6 & 10 & 10\\
11 & 6 & 15 & 10\\
12 & 6 & 20 & 10\\
13 & 20 & 50 & 10\\
14 & 20 & 100 & 10\\
15 & 20 & 150 & 10\\
\hline
\hline
\end{tabular} } 
\label{table}
\end{table}
\subsection{The performance of P-IDA}
The results of P-IDA over Groups 1-12 are compared with the \textit{First-Come-First-Served} (FCFS) algorithm in terms of efficiency. FCFS is a classic centralized scheduling algorithm which gives the priority to buyers who has an earlier arriving time. The optimal solutions for Groups 1-12 are computed by using Cplex\footnote{https://www.ibm.com/analytics/cplex-optimizer} solver. The FCFS algorithm and P-IDA are implemented in Python.

The optimal solutions and the efficiency results of P-IDA and FCFS over 12 groups with 
$\epsilon = 0.2, a_{max} = 7$ and $b_{min} = 0.1$ are shown in Fig.~\ref{fig3}.
It is observed that P-IDA with single-bid can achieve on average 94\% of the  efficiency against the optimal solution among the 12 groups, outperforming the solution obtained by FCFS allocation policy (around 88\% on average out of 100\%). This makes sense because the FCFS policy allocates buyers to sellers according to the buyer arrival order rather than the social welfare. 

In Fig.~\ref{fig1} (a), we compared the efficiency of P-IDA on the three bidding rules, single-bid, xor-bid and xor-bid-repeating with different values of $\epsilon$. The results are averaged over the 12 groups.
It is demonstrated that, bidding with xor-bid-repeating has higher efficiency (on average 98\%) than single-bid (on average 94\%) and xor-bid (on average 97\%). However the cost is the increased computation time due to the complexity of solving the winner determination model. With xor-bid-repeating, the size of the winner determination model is usually larger than that of xor-bid and single-bid, and also the auctioneer may have to respect the hard XOR-bid constraints during each bidding round when solving the winner determination model. Therefore, the computation time of P-IDA is significantly increased with xor-bid-repeating, especially when the value of $\epsilon$ is small. As shown in Fig.~\ref{fig1} (b), with $\epsilon = 0.1$, the computation time of xor-bid-repeating is 6 times longer than xor-bid and single-bid. Also, the computation times of all these three bidding strategies are decreased by increasing the value of $\epsilon$. This is because $\epsilon$ controls the rate at which the prices of buyers and sellers are increased and decreased across rounds. Therefore, with lower value of $\epsilon$, the auction needs more rounds to clear the market which in turn, increases the computation time. As shown in Fig.~\ref{fig1} (c), with $\epsilon = 0.1$, the P-IDA with single bid, xor-bid and xor-bid-repeating will terminate after 59 rounds and down to around 6 rounds with $\epsilon =0.5$. 
In addition to that, more rounds requires buyers and sellers to submit more bids and asks which reveals more value and cost information. In theory, more private information revelation contributes to higher auction efficiency. Therefore, as shown in Fig.~\ref{fig1} (a), the efficiency of the auction with small value of $\epsilon$ usually achieves higher efficiency than a larger one. However, in our auction, even with a large value of $\epsilon$ ($\epsilon = 0.5$), bidding with single-bid can reach above 94\% efficiency and more than 97\% efficiency with xor-bid and xor-bid-repeating, averaged over the 12 groups. This verifies that the proposed double auction has high efficiency. 

Fig.~\ref{fii3} shows the profit ratio performance of P-IDA averaged over the 12 groups with $\epsilon = 0.2$ and $b_{min} = 0.1$.
It reveals that the higher value of $a_{max}$ allows achieving higher profit ratio. For example, with $a_{max} =7$, P-IDA achieves on average 70\% profit ratio for these three bidding strategies and decreases dramatically to less than $45\%$ with $a_{max} =5$ and around $13\%$ with $a_{max} =3$. 
This makes sense because for each buyer, the allocation requirement is that the bid price of each buyer should be greater than or equal to the ask price of a feasible seller. When updating the bid prices and ask prices, a large value of $a_{max}$ requires more rounds to terminate the auction, which increases the possibility that a buyer reaches his or her value when the auction terminates. However with small value of $a_{max}$, the bid prices of buyers will quickly catch up with the ask prices of sellers, which increases the chance that P-IDA allocates a buyer with a relatively low bid price. Fig.~\ref{fii3} also shows that the profit ratio achieved by high efficiency bidding strategy, xor-bid-repeating, is slightly higher than xor-bid and single-bid. With $a_{max} = 7$, P-IDA with xor-bid-repeating generates 70\% profit ratio, compared to 63\% for xor-bid and 60\% for single bid.
This indicates that there is a positive correlation between the profit ratio and the efficiency. The high level of profit ratio is accompanied by a high efficiency bidding strategy (xor-bid-repeating). Clearly the cost is the high running time.
\subsection{The performance of P-IDA-SA}
The SA meta-heuristic algorithm is designed to solve the winner determination model in large scale settings where the optimal solutions  cannot be computed in a reasonable time. In this subsection, we test the performance of P-IDA-SA on Groups 13-15. The solutions generated by P-IDA-SA are compared against those generated by the FCFS and also a \textit{greedy allocation mechanism} adapted from~\cite{chichin2016towards}. The greedy allocation mechanism  performs well for resource allocation in two-sided markets, but this is not a truthful mechanism for sellers and also it requires complete valuations of buyers and sellers~\cite{chichin2016towards}.

\begin{table}[h]
\caption{The social welfares of P-IDA-SA, Greedy allocation and FCFS on Groups 13-15.} 
\centering  
\resizebox{0.49\textwidth}{!}{%
\begin{tabular}{p{1cm}p{1.5cm}p{1cm}p{1cm}p{1.8cm}}
\toprule
Group &P-IDA-SA &Greedy &FCFS & P-IDA-SA Run Time (s)\\
\midrule
13 & 333.8 & 333.5 & 192.1 &5\\
14 & 622.0 & 607.8 &367.2 &49\\
15 & 812.4 & 772.9 & 496.5&310\\
\bottomrule
\end{tabular} } 
\label{table1}
\end{table}

The solutions computed by P-IDA-SA are compared against the greedy allocation mechanism and also FCFS. The second column of TABLE~\ref{table1} shows the average social welfare of each group computed by P-IDA-SA. All buyers are assumed to adopt xor-bid bidding strategy with $\epsilon = 0.2$, $a_{max} = 7$ and $b_{min} = 0.1$. By applying SA to solve the winner determination model, we set the iteration number $R=1000$ and the permutation number $m = 32$. It is observed that the social welfare obtained by the proposed P-IDA-SA is on average 4\% higher than that generated by the greedy allocation mechanism.  The fourth column displays the social welfare computed by FCFS allocation policy. Clearly, our auction achieves a significant improvement over FCFS for all testing groups, achieving on average 40\% higher social welfare. These indicate that P-IDA-SA performs well in realistic large scale settings even without having access to complete information of buyers and sellers. Additional, based on the run time results, the proposed P-IDA-SA is capable of solving large problem instances with high responsiveness. 
\section{Conclusion}
\label{c}
We propose a price-based iterative double auction to compute social welfare maximizing schedules in charger sharing markets.
The proposed auction is suitable for the two-sided structure of the market and possesses desirable economic properties such as budget balance, individual rationality and truthfulness in reporting scheduling constraints. From the theory perspective, we advance the existing literature by extending one-sided auction-based scheduling to two-sided markets by proposing a double auction-based decentralized scheduling mechanism. In addition, the proposed auction is also interesting from practical application perspective. It achieves much better allocative efficiency than the first-come, first-served charger scheduling scheme which has been commonly used by charger sharing platforms.
It also scales well to larger problem  instances, which indicates its potential to be used for large scale charger sharing platforms. In this paper, we only applied the double auction to a day-ahead charging setting. In our future work, we plan to extend the iterative bidding framework to accommodate online dynamic charger sharing scheduling settings. The results presented in this paper pave the foundation and serve as the baseline for the planned extension.

\bibliography{Double}

\begin{thebibliography}{10}
\providecommand{\url}[1]{#1}
\csname url@samestyle\endcsname
\providecommand{\newblock}{\relax}
\providecommand{\bibinfo}[2]{#2}
\providecommand{\BIBentrySTDinterwordspacing}{\spaceskip=0pt\relax}
\providecommand{\BIBentryALTinterwordstretchfactor}{4}
\providecommand{\BIBentryALTinterwordspacing}{\spaceskip=\fontdimen2\font plus
\BIBentryALTinterwordstretchfactor\fontdimen3\font minus
  \fontdimen4\font\relax}
\providecommand{\BIBforeignlanguage}[2]{{%
\expandafter\ifx\csname l@#1\endcsname\relax
\typeout{** WARNING: IEEEtran.bst: No hyphenation pattern has been}%
\typeout{** loaded for the language `#1'. Using the pattern for}%
\typeout{** the default language instead.}%
\else
\language=\csname l@#1\endcsname
\fi
#2}}
\providecommand{\BIBdecl}{\relax}
\BIBdecl

\bibitem{plenter2018assessment}
F.~Plenter, F.~Chasin, M.~von Hoffen, J.~H. Betzing, M.~Matzner, and J.~Becker,
  ``Assessment of peer-provider potentials to share private electric vehicle
  charging stations,'' \emph{Transportation Research Part D: Transport and
  Environment}, vol.~64, pp. 178--191, 2018.

\bibitem{vanrykel2018fostering}
F.~Vanrykel, D.~Ernst, and M.~Bourgeois, ``Fostering share\&charge through
  proper regulation,'' \emph{Competition and Regulation in Network Industries},
  vol.~19, no. 1-2, pp. 25--52, 2018.

\bibitem{mas1995microeconomic}
A.~Mas-Colell, M.~D. Whinston, J.~R. Green \emph{et~al.}, \emph{Microeconomic
  theory}.\hskip 1em plus 0.5em minus 0.4em\relax Oxford university press New
  York, 1995, vol.~1.

\bibitem{rahman2016review}
I.~Rahman, P.~M. Vasant, B.~S.~M. Singh, M.~Abdullah-Al-Wadud, and N.~Adnan,
  ``Review of recent trends in optimization techniques for plug-in hybrid, and
  electric vehicle charging infrastructures,'' \emph{Renewable and Sustainable
  Energy Reviews}, vol.~58, pp. 1039--1047, 2016.

\bibitem{wang2016smart}
Q.~Wang, X.~Liu, J.~Du, and F.~Kong, ``Smart charging for electric vehicles: A
  survey from the algorithmic perspective,'' \emph{IEEE Communications Surveys
  \& Tutorials}, vol.~18, no.~2, pp. 1500--1517, 2016.

\bibitem{qin2011charging}
H.~Qin and W.~Zhang, ``Charging scheduling with minimal waiting in a network of
  electric vehicles and charging stations,'' in \emph{Proceedings of the Eighth
  ACM international workshop on Vehicular inter-networking}.\hskip 1em plus
  0.5em minus 0.4em\relax ACM, 2011, pp. 51--60.

\bibitem{de2014optimal}
J.~de~Hoog, T.~Alpcan, M.~Brazil, D.~A. Thomas, and I.~Mareels, ``Optimal
  charging of electric vehicles taking distribution network constraints into
  account,'' \emph{IEEE Transactions on Power Systems}, vol.~30, no.~1, pp.
  365--375, 2014.

\bibitem{kang2015centralized}
Q.~Kang, J.~Wang, M.~Zhou, and A.~C. Ammari, ``Centralized charging strategy
  and scheduling algorithm for electric vehicles under a battery swapping
  scenario,'' \emph{IEEE Transactions on Intelligent Transportation Systems},
  vol.~17, no.~3, pp. 659--669, 2015.

\bibitem{kang2017optimal}
Q.~Kang, S.~Feng, M.~Zhou, A.~C. Ammari, and K.~Sedraoui, ``Optimal load
  scheduling of plug-in hybrid electric vehicles via weight-aggregation
  multi-objective evolutionary algorithms,'' \emph{IEEE Transactions on
  Intelligent Transportation Systems}, vol.~18, no.~9, pp. 2557--2568, 2017.

\bibitem{darabi2016intelligent}
Z.~Darabi, P.~Fajri, and M.~Ferdowsi, ``Intelligent charge rate optimization of
  phevs incorporating driver satisfaction and grid constraints,'' \emph{IEEE
  Transactions on Intelligent Transportation Systems}, vol.~18, no.~5, pp.
  1325--1332, 2016.

\bibitem{wellman2001auction}
M.~P. Wellman, W.~E. Walsh, P.~R. Wurman, and J.~K. MacKie-Mason, ``Auction
  protocols for decentralized scheduling,'' \emph{Games and economic behavior},
  vol.~35, no. 1-2, pp. 271--303, 2001.

\bibitem{parkes2001auction}
D.~C. Parkes and L.~H. Ungar, ``An auction-based method for decentralized train
  scheduling,'' in \emph{Proceedings of the fifth international conference on
  Autonomous agents}.\hskip 1em plus 0.5em minus 0.4em\relax ACM, 2001, pp.
  43--50.

\bibitem{samadi2011optimal}
P.~Samadi, R.~Schober, and V.~W. Wong, ``Optimal energy consumption scheduling
  using mechanism design for the future smart grid,'' in \emph{2011 IEEE
  International Conference on Smart Grid Communications (SmartGridComm)}.\hskip
  1em plus 0.5em minus 0.4em\relax IEEE, 2011, pp. 369--374.

\bibitem{de2015market}
J.~de~Hoog, T.~Alpcan, M.~Brazil, D.~A. Thomas, and I.~Mareels, ``A market
  mechanism for electric vehicle charging under network constraints,''
  \emph{IEEE Transactions on Smart Grid}, vol.~7, no.~2, pp. 827--836, 2015.

\bibitem{gerding2013two}
E.~H. Gerding, S.~Stein, V.~Robu, D.~Zhao, and N.~R. Jennings, ``Two-sided
  online markets for electric vehicle charging,'' in \emph{Proceedings of the
  2013 international conference on Autonomous agents and multi-agent systems},
  2013, pp. 989--996.

\bibitem{ausubel2006lovely}
L.~M. Ausubel, P.~Milgrom \emph{et~al.}, ``The lovely but lonely vickrey
  auction,'' \emph{Combinatorial auctions}, vol.~17, pp. 22--26, 2006.

\bibitem{mcafee1992dominant}
R.~P. McAfee, ``A dominant strategy double auction,'' \emph{Journal of economic
  Theory}, vol.~56, no.~2, pp. 434--450, 1992.

\bibitem{chu2008truthful}
L.~Y. Chu and Z.-J.~M. Shen, ``Truthful double auction mechanisms,''
  \emph{Operations research}, vol.~56, no.~1, pp. 102--120, 2008.

\bibitem{chen2013tames}
Y.~Chen, J.~Zhang, K.~Wu, and Q.~Zhang, ``Tames: A truthful double auction for
  multi-demand heterogeneous spectrums,'' \emph{IEEE Transactions on Parallel
  and Distributed Systems}, vol.~25, no.~11, pp. 3012--3024, 2013.

\bibitem{chichin2016towards}
S.~Chichin, Q.~B. Vo, and R.~Kowalczyk, ``Towards efficient and truthful market
  mechanisms for double-sided cloud markets,'' \emph{IEEE Transactions on
  Services Computing}, vol.~10, no.~1, pp. 37--51, 2016.

\bibitem{majumder2014efficient}
B.~P. Majumder, M.~N. Faqiry, S.~Das, and A.~Pahwa, ``An efficient iterative
  double auction for energy trading in microgrids,'' in \emph{2014 IEEE
  Symposium on Computational Intelligence Applications in Smart Grid
  (CIASG)}.\hskip 1em plus 0.5em minus 0.4em\relax IEEE, 2014, pp. 1--7.

\bibitem{faqiry2018double}
M.~N. Faqiry and S.~Das, ``Double auction with hidden user information:
  Application to energy transaction in microgrid,'' \emph{IEEE Transactions on
  Systems, Man, and Cybernetics: Systems}, no.~99, pp. 1--14, 2018.

\bibitem{kang2017enabling}
J.~Kang, R.~Yu, X.~Huang, S.~Maharjan, Y.~Zhang, and E.~Hossain, ``Enabling
  localized peer-to-peer electricity trading among plug-in hybrid electric
  vehicles using consortium blockchains,'' \emph{IEEE Transactions on
  Industrial Informatics}, vol.~13, no.~6, pp. 3154--3164, 2017.

\bibitem{iosifidis2015double}
G.~Iosifidis, L.~Gao, J.~Huang, and L.~Tassiulas, ``A double-auction mechanism
  for mobile data-offloading markets,'' \emph{IEEE/ACM Transactions on
  Networking (TON)}, vol.~23, no.~5, pp. 1634--1647, 2015.

\bibitem{deng2014distributed}
R.~Deng, Z.~Yang, F.~Hou, M.-Y. Chow, and J.~Chen, ``Distributed real-time
  demand response in multiseller--multibuyer smart distribution grid,''
  \emph{IEEE Transactions on Power Systems}, vol.~30, no.~5, pp. 2364--2374,
  2014.

\bibitem{kutanoglu1999combinatorial}
E.~Kutanoglu and S.~David~Wu, ``On combinatorial auction and lagrangean
  relaxation for distributed resource scheduling,'' \emph{IIE transactions},
  vol.~31, no.~9, pp. 813--826, 1999.

\bibitem{wang2009constraint}
C.~Wang, H.~H. Ghenniwa, and W.~Shen, ``Constraint-based winner determination
  for auction-based scheduling,'' \emph{IEEE Transactions on Systems, Man, and
  Cybernetics-Part A: Systems and Humans}, vol.~39, no.~3, pp. 609--618, 2009.

\bibitem{wang2012requirement}
C.~Wang, ``Requirement-based bidding language for agent-based scheduling,''
  \emph{IEEE Transactions on Automation Science and Engineering}, vol.~10,
  no.~1, pp. 219--223, 2012.

\bibitem{wang2011due}
C.~Wang, Z.~Wang, H.~H. Ghenniwa, and W.~Shen, ``Due-date management through
  iterative bidding,'' \emph{IEEE Transactions on Systems, Man, and
  Cybernetics-Part A: Systems and Humans}, vol.~41, no.~6, pp. 1182--1198,
  2011.

\bibitem{wang2013service}
C.~Wang and F.~Dargahi, ``Service customization under capacity constraints: an
  auction-based model,'' \emph{Journal of Intelligent Manufacturing}, vol.~24,
  no.~5, pp. 1033--1045, 2013.

\bibitem{parkes2006iterative}
D.~C. Parkes, \emph{Iterative combinatorial auctions}.\hskip 1em plus 0.5em
  minus 0.4em\relax MIT press, 2006.

\bibitem{parkes2001iterative}
D.~C. Parkes and L.~H. Ungar, \emph{Iterative combinatorial auctions: Achieving
  economic and computational efficiency}.\hskip 1em plus 0.5em minus
  0.4em\relax University of Pennsylvania, 2001.

\bibitem{lehmann2002truth}
D.~Lehmann, L.~I. O{\'c}allaghan, and Y.~Shoham, ``Truth revelation in
  approximately efficient combinatorial auctions,'' \emph{Journal of the ACM
  (JACM)}, vol.~49, no.~5, pp. 577--602, 2002.

\bibitem{rardin1998optimization}
R.~L. Rardin and R.~L. Rardin, \emph{Optimization in operations
  research}.\hskip 1em plus 0.5em minus 0.4em\relax Prentice Hall Upper Saddle
  River, NJ, 1998, vol. 166.

\bibitem{nisan2007algorithmic}
N.~Nisan, T.~Roughgarden, E.~Tardos, and V.~V. Vazirani, \emph{Algorithmic game
  theory}.\hskip 1em plus 0.5em minus 0.4em\relax Cambridge university press,
  2007.

\bibitem{xia2005solving}
M.~Xia, J.~Stallaert, and A.~B. Whinston, ``Solving the combinatorial double
  auction problem,'' \emph{European Journal of Operational Research}, vol. 164,
  no.~1, pp. 239--251, 2005.

\bibitem{kirkpatrick1983optimization}
S.~Kirkpatrick, C.~D. Gelatt, and M.~P. Vecchi, ``Optimization by simulated
  annealing,'' \emph{science}, vol. 220, no. 4598, pp. 671--680, 1983.

\bibitem{parkes2000iterative}
D.~C. Parkes and L.~H. Ungar, ``Iterative combinatorial auctions: Theory and
  practice,'' \emph{AAAI/IAAI}, vol. 7481, 2000.

\bibitem{nisan2006bidding}
N.~Nisan, ``Bidding languages,'' in \emph{Combinatorial auctions}.\hskip 1em
  plus 0.5em minus 0.4em\relax MIT Press Cambridge, MA, 2006, pp. 400--415.

\bibitem{sandholm2002algorithm}
T.~Sandholm, ``Algorithm for optimal winner determination in combinatorial
  auctions,'' \emph{Artificial intelligence}, vol. 135, no. 1-2, pp. 1--54,
  2002.

\bibitem{santos2011summary}
A.~Santos, N.~McGuckin, H.~Y. Nakamoto, D.~Gray, and S.~Liss, ``Summary of
  travel trends: 2009 national household travel survey,'' Tech. Rep., 2011.

\end{thebibliography}
\bibliographystyle{IEEEtran}
\end{document}